\documentclass[a4paper,UKenglish,cleveref, autoref, thm-restate]{lipics-v2021}

\title{Greedy Vector Balancing}

\author{Wojciech Czerwi\'nski}{University of Warsaw, Poland}{wczerwin@mimuw.edu.pl}{https://orcid.org/0000-0002-6169-868X}
{Supported by the ERC grant INFSYS, agreement no. 950398}
\author{Daniel Dadush}{Centrum Wiskunde en Informatica \& Utrecht University, The Netherlands}{d.n.dadush@cwi.nl}{https://orcid.org/0000-0001-5577-5012}
{}
\author{Ekin Ergen}{Technical University of Berlin, Germany}{ergen@math.tu-berlin.de}{https://orcid.org/0009-0000-5502-2619}
{Supported by DFG under MATH+
(EXC-2046/2, project ID: 390685689)}
\author{Arka Ghosh}{University of Bordeaux, France}{arka.ghosh@u-bordeaux.fr}{https://orcid.org/0000-0003-3839-8459}
{Partially supported by the Polish National Science Centre (NCN) grant ”Linear algebra in orbit finite dimension” (2022/45/N/ST6/03242) and the SAIF project, funded by the ”France 2030” government
investment plan managed by the French National Research Agency, under the reference ANR-23-PEIA-0006}
\author{S{\l}awomir Lasota}{University of Warsaw, Poland}{s.lasota@uw.edu.pl}{https://orcid.org/0000-0001-8674-4470}
{Partially supported by the ERC grant INFSYS, agreement no. 950398, and by the NCN grant 2024/55/B/ST6/01674.}
\author{{\L}ukasz Orlikowski}{University of Warsaw, Poland}{l.orlikowski@mimuw.edu.pl}{https://orcid.org/0009-0001-4727-2068}
{Supported by the ERC grant INFSYS, agreement no. 950398}

\authorrunning{Czerwi\'nski et al.} 

\ccsdesc[500]{Theory of computation~Computational geometry}

\keywords{computational geometry,
continuous optimization,
matrices and tensors,
online algorithms,
greedy algorithms,
vector balancing}

\EventEditors{John Q. Open and Joan R. Access}
\EventNoEds{2}
\EventLongTitle{42nd Conference on Very Important Topics (CVIT 2016)}
\EventShortTitle{CVIT 2016}
\EventAcronym{CVIT}
\EventYear{2016}
\EventDate{December 24--27, 2016}
\EventLocation{Little Whinging, United Kingdom}
\EventLogo{}
\SeriesVolume{42}
\ArticleNo{23}

\usepackage{xcolor}
\usepackage{xspace}
\usepackage[disable]{todonotes}
\usepackage{mathtools}
\usepackage{amsthm}
\usepackage{enumitem}
\usepackage{cleveref}
\usetikzlibrary{arrows.meta}

\newcommand{\R}{\mathbb{R}}

\newcommand{\N}{\mathbb{N}}

\newcommand{\vr}[1]{\mathbf{#1}}
 
\newcommand{\spn}[1]{\textsc{span}(#1)}
\newcommand{\setof}[2]{\left\{#1 \, : \, #2\right\}}
\newcommand{\para}[1]{\vspace{-2mm}\subparagraph*{\rm \bf #1.}}
\newcommand{\dist}[1]{{\rm dist}({#1})}

\newcommand{\innerprod}[2]{\langle #1, #2 \rangle}

\newcommand{\interval}[2]{[#1,#2]}
\newcommand{\convhull}[1]{\textsc{conv-hull} #1}
\newcommand{\proj}[1]{\pi_{#1}}

\newcommand{\ball}[2]{{\cal B}^d(#2)}

\newcommand{\eps}{\bar\delta}

\newcommand{\hypp}[1]{{\cal H}^\bot_{#1}}
\newcommand{\KK}[1]{\KKK^{\delta}_{#1}}
\newcommand{\KKK}{{\cal K}}
\newcommand{\slball}[2]{{\cal B}^m(#2)}
\newcommand{\slsphere}[2]{{\cal S}^{m-1}(#2)}

\renewcommand{\dim}[1]{\text{dim}(#1)}

\DeclarePairedDelimiter{\norm}{\lVert}{\rVert}

\hideLIPIcs
\nolinenumbers
\begin{document}

\maketitle


\begin{abstract}

In online vector balancing, vectors $\vr t_1, \dots, \vr t_n$ arrive one
by one from a given set $T$ and the goal is to assign signs $s_1,\dots,s_n \in \{\pm 1\}$ in an
online manner so as to minimize the largest norm of any signed prefix sum $\sum_{i=1}^k s_i \vr
t_i$, $k \in [n]$. In this paper, we
analyze the natural Euclidean \emph{greedy vector balancing algorithm} for this problem: at each step
$k$, the sign $s_k \in \{\pm 1\}$ is chosen so that $s_k \vr t_k$ has non-positive inner product with $\sum_{i=1}^{k-1} s_i \vr t_i$. Our main result is the first finite bound, independent of the sequence length $n$, on the performance of greedy whenever $T$ is finite. When $T \subset \R^d$ consists of unit vectors, we prove that the signed sums produced by greedy have Euclidean norm at most $(2/\delta_T)^{d-1}$,
where $\delta_T$ is the minimum non-zero distance between vectors in $T$ and
subspaces spanned by vectors in $T$. The same upper bound holds when the
sequences are composed of scaled down vectors in $T$. We also provide a simple
set $T$ for which $\Omega(\sqrt{d}/\delta_T)$ is a lower bound.   
       
We analyze the greedy algorithm by proving the existence of a bounded convex
$K_T$ that is \emph{$T$-absorbing}: $\forall \vr x \in K_T$ and $\vr t \in \pm
T$, $\innerprod{\vr x}{\vr t} \leq 0 \Rightarrow \vr x+\vr t \in K_T$. We give
an explicit construction of a set $K_T$ contained in a ball of radius $(2/\delta_T)^{d-1}$, based on chains of subspaces spanned by vectors in $T$, which may be of
independent interest.  

We further generalize our greedy vector balancing bound to the setting of
online vector partitioning, where the sequence $\vr t_1, \dots, \vr t_n$ must
be partitioned in an online manner into $p$ subsequences of nearly equal sum.
As an application, we prove a special case of a conjecture of Bosman et al.
(Theory of Computing Systems, 2025), which implies that a lexicographic version
of total completion time scheduling under scenarios is polynomial time solvable
when the number of scenarios is fixed. 
%
%
\end{abstract}

\newpage

\section{Introduction} \label{sec:intro}

In online vector balancing, the task is to assign signs $s_1,s_2,\dots \in
\{\pm 1\}$ to an online sequence of vectors $\vr t_1, \vr t_2, \dots$ from a
universe $T \subseteq \R^d$, such that the signed combinations $\sum_{i=1}^k
s_i \vr t_i$, $k \geq 1$, have as small norm (also called discrepancy) as
possible in a given target norm. The online constraint is that the sign of a
vector must be chosen immediately after it arrives without knowledge of the
future. This problem has been studied extensively in the literature, across
adaptive adversary
models~\cite{spencer1977balancing,lagarias1977discrete,barany1979class,spencer1986balancing,doerr2001vector,barany2025balancing},
oblivious adversary
models~\cite{bansal2020online,alweiss2021discrepancy,liu2022gaussian,kulkarni2024optimal},
and stochastic models~\cite{aru2018balancing,bansal2021online} both in terms
existential bounds as well as algorithms. Algorithmic applications have been
given to provide improved SGD convergence rates~\cite{lu2022grab}, improved
algorithms for numerical integration~\cite{bansal2025quasi,dwivedi2024kernel},
as well as online item allocation~\cite{benade2018make}.    

The online vector balancing model was first proposed by
Spencer~\cite{spencer1977balancing,spencer1986balancing} under the framework of
a two player game, where the ``Pusher'' player chooses the sequence of vectors
in $T$ and the ``Chooser'' picks the signs. One of the simplest and most
natural online strategies for the Chooser, which will be the focus of this
work, is the Euclidean \emph{greedy vector balancing algorithm}: at each
iteration, it picks a sign that (locally) minimizes the Euclidean norm of the
next iterate. The algorithm is extremely simple: at iteration $k$, greedy picks
a sign $s_k \in \{\pm 1\}$ such that $s_k \vr t_k$ has non-positive inner
product with the previous iterate $\vr x_{k-1} := \sum_{i=1}^{k-1} s_i \vr t_i$, where $\pm 1$ are both valid
when the inner product is zero. The $k$th iterate is then given by $\vr x_k =
\vr x_{k-1} + s_k \vr t_k$. We shall say that a sequence $0 := \vr x_0,\vr
x_1,\dots$ is a \emph{greedy sequence for $T$} (or greedy $T$-sequence), if it
is consistent with the iterates produced by the greedy algorithm on some input
sequence from $T$. We denote the norm of a greedy $T$-sequence
$0:=\vr x_0,\vr x_1,\dots,\vr x_n$ by $\max_{k \in [n]} \|\vr x_i\|$. 

As observed by Spencer~\cite{spencer1977balancing}, the $k$th iterate $\vr x_k$
of a greedy sequence for the unit Euclidean sphere ${\cal S}^{d-1}$ in $\R^d$
has norm at most $\sqrt{k}$, as the squared norm increases by at most $1$ at
each iteration. This bound is also optimal for \emph{any online algorithm}, as
one may pick the input sequence such that each next vector is a unit vector
orthogonal to the previous iterate. We note that much better bounds are known
when the Pusher is \emph{oblivious} to choices made by Chooser instead of
adaptive, that is, when the Pusher must commit to the input sequence in
advance. In this case, randomized strategies allow for norm bounds growing
polylogarithmically with the sequence length
$n$~\cite{bansal2020online,alweiss2021discrepancy,liu2022gaussian,kulkarni2024optimal}. The focus of this work
will be an improved analysis of greedy sequence norms however, where there is
no difference between an adaptive and oblivious Pusher. 

While the above paragraph provides a complete understanding when the universe
$T$ is the unit sphere, there are multiple contexts appearing in applications
in which the universe $T$ is instead finite and where one may hope for stronger
greedy sequence norm estimates. 
Our primary application will be in the context of scenario
scheduling~\cite{bosman2025total}, which we detail later, where $T$ consists
of binary vectors indicating which scenarios each job participates in and where
the optimal solution induces a greedy $T$-sequence. We also mention an
application by Lu, Guo and De Sa~\cite{lu2022grab} in the context of stochastic
gradient descent (SGD), where $T$ consists of the possible sample gradients and
greedy is used by a sample reordering algorithm called Gradient Balancing
Algorithm (Grab).

\para{\bf Main Contribution} Motivated by the above, we ask the following
basic question: do greedy $T$-sequences have uniformly bounded norm,
independent of the length of the sequence, when $T$ is finite? In this context,
by choosing $T$ to be an $\epsilon$-net of the sphere, it is certainly
conceivable that one might be able to construct a greedy sequence whose norm
tends to infinity. As our main contribution, we in fact show that this is not
possible, and thus answer the above question in the affirmative. Somewhat
surprisingly, we are not aware of any prior work in this direction, even for
natural special cases such as $T = \{0,1\}^d$. 

Our results are quantitative and expose a connection to geometric parameter of
a point set on a sphere called the $\delta$-distance property. This parameter
has been extensively studied in the linear programming
literature~\cite{bonifas2012sub,brunsch2013finding,brunsch2014solving,dadush2016shadow,dadush2022finding,ekbatani2022circuit,dadush2026excluding},
where it has been shown to give a strong notion of numerical complexity of the
(normalized) rows of a constraint matrix. We give the definition below for
point sets on the unit sphere. 

\begin{definition} \label{def:delta}
A finite set of vectors $T \subset {\cal S}^{d-1}$ satisfies the $\delta$-distance
property if 
\[
\min \{\dist{\spn U,\vr t}: U \subset T, \vr t \in T \setminus \spn U\} \ \geq \ \delta.
\]
where $\spn{\cdot}$ is the linear span and $\dist{\cdot,\cdot}$ is the minimum
Euclidean distance. We further define $\delta_T \geq 0$ to be the largest $\delta \geq 0$ for which $T$ satisfies the $\delta$-distance property.
\end{definition}


Note that $\delta_T$ is strictly positive since each term in the minimum is
positive (since $\vr t \notin \spn U$). 
If $T$ is an infinite subset of
the sphere we always have $\delta_T = 0$ however, since by compactness of
${\cal S}^{d-1}$ there is an infinite sequence of distinct points $\vr
t_1,\vr t_2,\dots \in T$ satisfying $\lim_{i \rightarrow \infty} \norm{\vr
t_i-\vr t_{i+1}} = 0$. 

With this definition in hand, we may state our main result.

\begin{restatable}{theorem}{greedybalancing}\label{thm:main} 
Let $T \subseteq {\cal S}^{d-1}$ be a finite set. Define $G(T)$ to the maximum
norm of any greedy sequence for $[-1,1] T := \{a \vr t: a \in [-1,1], \vr t \in
T\}$. Then $G(T) \leq (2/\delta_T)^{d-1}$.  
\end{restatable}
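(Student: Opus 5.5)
The plan is to follow the strategy announced in the abstract: build a bounded convex set $K_T$ that is $T$-absorbing, meaning that for every $\vr x \in K_T$ and every $\vr t \in \pm[-1,1]T$ with $\innerprod{\vr x}{\vr t} \le 0$ we have $\vr x + \vr t \in K_T$. If in addition $0 \in K_T$, induction along any greedy sequence keeps every iterate inside $K_T$, so $G(T)$ is at most the radius of $K_T$. Scaled vectors cost nothing extra. For $a \in [0,1]$ and $\innerprod{\vr x}{\vr t}\le 0$, the point $\vr x + a\vr t$ lies on the segment between $\vr x$ and $\vr x + \vr t$, both in $K_T$, so convexity gives $\vr x + a\vr t \in K_T$. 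It therefore suffices to handle $\vr t \in \pm T$, and the whole task is to construct $K_T \subseteq {\cal B}^d((2/\delta_T)^{d-1})$.

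I will build $K_T$ by induction on the dimension $m$ of the subspace $V = \spn{U}$, for $U \subseteq T$, treating $T \cap V$ as the universe. The claim is that there is a convex, origin-symmetric, $(T\cap V)$-absorbing set $K_V \subseteq V$ of radius $R_m \le (2/\delta)^{m-1}$.

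For $m = 1$, $V$ is spanned by a single $\pm\vr t$. Take $K_V = [-1,1]\vr t$: if $\vr x = c\vr t$ with $c \ge 0$, then adding $-\vr t$ stays in the segment, so this works with $R_1 = 1$.

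For the step, fix $V$ of dimension $m$ and let $W_1,\dots,W_k$ be the hyperplanes of $V$ spanned by vectors of $T$. I plan to define
\[
K_V = \operatorname{conv}\Bigl(\bigcup_i \bigl(K_{W_i} + [-L,L]\,\vr n_i\bigr)\Bigr) \cap \{\vr x : |\innerprod{\vr x}{\vr n_i}| \le L \ \forall i\}
\]
or some similar intersection of slabs around lower-dimensional pieces. Here $\vr n_i$ is the unit normal of $W_i$ within $V$, and $L$ is chosen of order $R_{m-1}/\delta$.

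The geometric input is the $\delta$-distance property. Every $\vr t \in T\cap V$ not lying in $W_i$ satisfies $|\innerprod{\vr t}{\vr n_i}| \ge \delta$. Consider a point $\vr x$ far from the origin, at distance more than $R_{m-1}$ from every $W_i$ in the relevant direction. For such a point, any $\vr t$ with $\innerprod{\vr x}{\vr t}\le 0$ must decrease the normal coordinate toward some $W_i$ by at least $\delta$, or else it lies in $W_i$ and acts by induction. That gives the factor $2/\delta$ per dimension: the distance to a facet hyperplane $W_i$ can grow by at most $1$ (the vector length) before a step that lies outside $W_i$ pushes it back, and in-plane motion is controlled by $R_{m-1}$.

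The main obstacle is choosing $K_V$ so that absorption holds exactly, not merely approximately, at every boundary point. The condition is local: if $\vr x \in \partial K_V$ and $\vr t$ points outward along a face normal $\vr u$, we need $\innerprod{\vr x}{\vr t} > 0$. I will first try $K_V = \{\vr x \in V : \|\vr x\| \le R_m\} \cap \bigcap_{\text{chains}} \{\ldots\}$. If that fails, I will fall back on a chain-of-subspaces construction, as the abstract suggests. For a flag $V_1 \subset \dots \subset V_m = V$ of $T$-spanned subspaces, require that the projection of $\vr x$ orthogonal to $V_{j-1}$ inside $V_j$ have norm at most $(2/\delta)^{j-1}$ whenever $\vr x$ is close to $V_j$. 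Then I would verify, case by case on which constraint is tight, that an obtuse step cannot violate it. The key inequality is that a step $\vr t \notin V_{j}$ has a component of size at least $\delta$ orthogonal to $V_{j}$, and that component is either inward or makes $\innerprod{\vr x}{\vr t}>0$ once the normal part of $\vr x$ exceeds $(2/\delta)\cdot(\text{in-plane bound})$. Summing the geometric bounds over the flag gives radius $(2/\delta)^{d-1}$, using $\sum_{j<d}(2/\delta)^{j-1}\le (2/\delta)^{d-1}$ since $2/\delta\ge 2$.
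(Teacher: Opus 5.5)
Your reduction is sound. A convex set containing $\vr 0$ that absorbs every step from $\pm T$ traps all greedy $[-1,1]T$-sequences, scaled steps follow by convexity, and the base case $[-1,1]\vr t$ is correct. Beyond that, however, there is no proof. You never commit to a set $K_V$ and never verify absorption for any candidate, and that is the whole content of the theorem.

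The verification route you sketch would also fail. Absorption is not a local boundary condition, because steps have length $1$. The segment $K=[-0.4,0.4]\,\vr t$ with $T=\{\pm\vr t\}$ satisfies your criterion: at $\pm 0.4\vr t$ the only outward step has positive inner product with $\vr x$. Yet $0.4\vr t-\vr t\notin K$. What makes the finite step tractable in the paper is the $T$-projective condition $\proj{\vr t}(K)+\interval{-\vr t}{\vr t}\subseteq K$. It implies absorption, since for $\innerprod{\vr x}{\vr t}\le 0$ the point $\vr x+\vr t$ lies between $\vr x$ and $\proj{\vr t}(\vr x)+\vr t$. Because $\proj{\vr t}$ commutes with convex hulls, it need only be checked on the pieces generating $K$. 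Your heuristic that an obtuse step leaving $W_i$ ``must decrease the normal coordinate by at least $\delta$'' is also unfounded. The $\delta$-distance property bounds $|\innerprod{\vr t}{\vr n_i}|$ from below but says nothing about its sign. When $\vr x$ has a large component inside $W_i$, an obtuse step can move away from $W_i$.

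Your first candidate, built from the hyperplanes $W_i$ only, is essentially the construction the paper explicitly says it cannot make work: for $\vr t\notin W$, nothing controls $\proj{\vr t}(K_W)$ beyond its radius. The missing idea is to take every maximal $U\subset T$ of every dimension $e<d$, attach to $K_U$ the whole sphere of radius $r_d$ in $\spn{U}^\perp\cap\spn{T}$, and take the convex hull of all these pieces. This is an inner description, unlike your intersection of conditional constraints ``whenever $\vr x$ is close to $V_j$'', which need not even be convex. Projectivity along $\vr t\notin U$ is then shown piece by piece.
\begin{itemize}
\item If $\vr r+K_U$ is far from $\hypp{\vr t}$, its projection lies in the ball of radius $r_d$ in $\hypp{\vr t}\cap\spn{T}$. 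That ball plus $\interval{-\vr t}{\vr t}$ is already in $K_T$.
\item If it is close, the complementary-angle lemma applies: no $\vr x$ lies within $\eps\norm{\vr x}$ of both the line $\hypp{U}\cap\spn{U'}$ and the hyperplane $\hypp{\vr t}\cap\spn{U'}$. This forces $\norm{\proj{\spn{U'}}(\vr r)}\le r_{e+1}$. Hence the piece lies in the hull of $L_{U'}$ for the maximal $U'$ spanning $U\cup\{\vr t\}$, where the induction applies.
\end{itemize}
This step needs the exact radii $r_i^2=R_i^2-R_{i-1}^2$ with $R_i=(1/\eps)^{i-1}$ and $\eps=\sin(\alpha/2)\ge\delta/2$.

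Your bookkeeping does not reach the claimed bound either. Orthogonal flag components of norms up to $(2/\delta)^{j-1}$, $j=1,\dots,d$, give radius $\bigl(\sum_{j=1}^{d}(2/\delta)^{2(j-1)}\bigr)^{1/2}$, which exceeds $(2/\delta)^{d-1}$.
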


Note that $G(T)$ in fact controls the norm of greedy sequences induced by $[-1,1]$ scalings of elements in $T$. Given this, we may use the above theorem 
to obtain estimates for arbitrary sets by appropriate scaling. Letting
$\ball{\vr 0}{R}$ denote the ball of radius $R$ in $\R^d$, we obtain the following
direct corollary:

\begin{corollary}\label{cor:scaling} 
For $T \subseteq \ball{\vr 0}{R}$, let $\hat{T} := \{\frac{\vr t}{\norm{\vr t}}: \vr
t \in T-\{0\}\}$. Then, we have that $G(T) \leq R G({\hat T}) \leq R (2/\delta_{\hat T})^{d-1}$. 
\end{corollary}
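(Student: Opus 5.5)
The plan is to reduce the corollary to Theorem~\ref{thm:main} by a scaling argument, showing that every greedy sequence for $T$ is, after dividing by $R$, a greedy sequence for $[-1,1]\hat T$. The second inequality, $R\,G(\hat T) \le R(2/\delta_{\hat T})^{d-1}$, is then immediate from Theorem~\ref{thm:main} applied to the finite set $\hat T \subseteq {\cal S}^{d-1}$. Here $G(T)$ is read as the maximum norm of any greedy $T$-sequence, and we implicitly assume $T$ is finite so that $\hat T$ is finite; if $T \subseteq \{\vr 0\}$ the claim is trivial.

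First, every vector of $T$ lies in $R\,[-1,1]\hat T$. Indeed, for $\vr t \in T\setminus\{\vr 0\}$ we can write $\vr t = R\, a\, \hat{\vr t}$ with $\hat{\vr t} = \vr t/\norm{\vr t}\in \hat T$ and $a = \norm{\vr t}/R \in [0,1]$. The zero vector is $R\cdot 0\cdot \hat{\vr t}$ for any $\hat{\vr t}\in\hat T$.

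Next, let $\vr 0=\vr x_0,\dots,\vr x_n$ be a greedy $T$-sequence, generated by inputs $\vr t_1,\dots,\vr t_n$ and signs $s_i$, and set $\vr y_k := \vr x_k/R$. Each $\vr y_k$ is obtained from $\vr y_{k-1}$ by adding $s_k (\vr t_k/R)$, where $\vr t_k/R \in [-1,1]\hat T$ by the first step. The greedy condition is preserved, because $\innerprod{\vr y_{k-1}}{s_k\vr t_k/R} = R^{-2}\innerprod{\vr x_{k-1}}{s_k \vr t_k}\le 0$. Hence $(\vr y_k)$ is a greedy sequence for $[-1,1]\hat T$, so $\max_k\norm{\vr y_k}\le G(\hat T)$. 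Multiplying by $R$ gives $\max_k \norm{\vr x_k}\le R\,G(\hat T)$, and taking the supremum over all greedy $T$-sequences yields the first inequality.

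There is no real obstacle here. The only points needing care are the zero vector, which is absorbed by allowing the scalar $0$ in $[-1,1]$, and the fact that positive scaling preserves the sign of inner products, so greedy choices (including ties) transfer exactly.
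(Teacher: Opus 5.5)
Your scaling argument is correct and is exactly the ``appropriate scaling'' the paper has in mind; the paper states this corollary as a direct consequence of \Cref{thm:main} and gives no further proof. One small point: the paper defines $G(\cdot)$ as the maximum over greedy sequences for $[-1,1]T$, not just $T$, but your argument covers this verbatim, since $a\vr t/R = a(\norm{\vr t}/R)\hat{\vr t} \in [-1,1]\hat T$ for every $a\in[-1,1]$ and nonzero $\vr t\in T$.
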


For the special case of $T = \{0,1\}^d$, it was shown by Alon and
Vu~\cite[Theorem 3.2.2]{ALON1997133} that the point to subspace distance $\min
\{\dist{\spn S, \vr t}: \vr t \notin \spn S, \vr t \cup S \subseteq \{0,1\}^d\}
= d^{-d/2+o(d)}$. Since every vector in $T=\{0,1\}^d$ has Euclidean norm at
most $\sqrt{d}$, by scaling this implies that $\delta_{\hat{T}} =
d^{-d/2+o(d)}$ as well. Applying the above corollary, we conclude that
$G({\{0,1\}^d}) \leq d^{d^2/2+o(d^2)}$. As stated previously, this is the first
finite estimate on the behavior of the greedy algorithm even for this special
case. We will see an application of the bound for $G(\{0,1\}^d)$ to scenario
scheduling in \Cref{sec:intro-scheduling}. 

We also briefly mention that \Cref{cor:scaling} can be used to bound an
important parameter required in the proof of convergence of Grab enhanced
SGD~\cite{lu2022grab}. Specifically, they require a finite bound on the norm of
greedy sequences induced by potential sample gradients~\cite[Assumption
5]{lu2022grab}, which \Cref{cor:scaling} justifies when the sample gradients are
bounded and take on only a finite number of directions. This captures,
for example, problems of the form $\min_{\vr x \in {\cal D}}
\frac{1}{n} \sum_{\vr y \in S} f(\vr x, \vr y)^p$, $p > 0$, where $S = \{\vr
y_1,\dots, \vr y_n\}$ is the sample space, ${\cal D} \subseteq \R^d$ is a
bounded domain, and the functions $f(\cdot,\vr y), \forall \vr y \in S$, are
continuous piecewise linear.

We do not know if the estimate in \Cref{thm:main} is tight. We are however able
to give a simple lower bound showing that a linear dependence on $1/\delta_T$
is necessary in the worst-case. The lower bound consists of orthogonal copies
of a two-dimensional greedy sequence for the set $[-1,1]T$ where $T$ is the set of nearly parallel vectors $\{\vr u_1, \vr v_1\} := \{(0,1),(\delta,-\sqrt{1-\delta^2})\}$ for $\delta \in (0,1)$. The two-dimensional case is illustrated in \cref{fig:lower_bound}. This yields the following result.

%
%

\begin{figure}[t!]
    \centering
\begin{tikzpicture}[
    >=Stealth,
    thick,
    scale=1
]

\coordinate (P0) at (0,0);

\coordinate (P1) at (0,2);
\coordinate (P2) at (1,0);
\coordinate (P3) at (1,2);
\coordinate (P4) at (2,0);
\coordinate (P5) at (2,2);
\coordinate (P6) at (1,4);
\draw[->, color=blue] (P0) -- (P1) node[midway,left] {$w$};
\draw[->] (P1) -- (P2) node[midway,left] {$v_1$};

\draw[->, color=blue] (P2) -- (P3) node[midway,left] {$w$};
\draw[->] (P3) -- (P4) node[midway,left] {$v_1$};

\draw[->, color=blue] (P4) -- (P5) node[midway,left] {$w$};

\draw[->, thick] (P5) -- (P6) node[midway,left] {$-v_1$};

\node[right] at (P4) {$(\frac{1}{\delta}, 0)$};

\end{tikzpicture}
    \caption{Example of a greedy sequence in which the difference between consecutive elements alternates between $w \coloneqq \sqrt{1-\delta^2} \cdot u_1$ and $v_1$ until the sequence reaches $\left(\tfrac{1}{\delta}, \sqrt{1-\delta^2}\right)$, at which point the final difference is $-v_1$.}
    \label{fig:lower_bound}
\end{figure}

\begin{restatable}{lemma}{lowerbound}\label{lem:greedy-lower-bound}
For every $d \in \N_+$ and every $\delta \in (0,1)$ there is a set $T \subseteq {\cal S}^{2d-1}$ with $\delta_T = \delta$ and $ G(T) \geq \sqrt{d}/\delta$.
\end{restatable}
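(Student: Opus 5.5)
The plan is to take $T$ to be $d$ mutually orthogonal copies of the two-dimensional pair from \cref{fig:lower_bound}. I first build a greedy sequence of norm at least $1/\delta$ inside one plane, and then concatenate $d$ such sequences in orthogonal planes. The norms then add in squares, giving $\sqrt{d}/\delta$.

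\textbf{Construction and $\delta_T=\delta$.} Let $\vr e_1,\dots,\vr e_{2d}$ be the standard basis. For $i\in[d]$ set $\vr u_i := \vr e_{2i}$ and $\vr v_i := \delta \vr e_{2i-1} - \sqrt{1-\delta^2}\,\vr e_{2i}$, and let $T := \{\vr u_i,\vr v_i : i\in[d]\}\subseteq {\cal S}^{2d-1}$. Fix $U\subseteq T$ and $\vr t\in T\setminus \spn U$, with $\vr t$ in plane $P_i=\spn{\vr e_{2i-1},\vr e_{2i}}$.
\begin{itemize}
\item The planes $P_j$ are mutually orthogonal and $\spn U$ is the direct sum of its pieces in each plane. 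Hence $\dist{\spn U,\vr t}$ equals the distance from $\vr t$ to $\spn{U\cap P_i}$.
\item If $U\cap P_i=\emptyset$, this distance is $1$.
\item Otherwise $U\cap P_i$ is the single other vector of the pair, since both would span $\vr t$. As $|\innerprod{\vr u_i}{\vr v_i}|=\sqrt{1-\delta^2}$, the distance is $\sqrt{1-(1-\delta^2)}=\delta$.
\end{itemize}
The value $\delta$ is attained, so $\delta_T=\delta$ because $\delta<1$.

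\textbf{The planar sequence.} Work in $P_1$ with coordinates $(a,b)$ and let $\vr w:=\sqrt{1-\delta^2}\,\vr u_1\in[-1,1]T$. Starting from $0$, alternately add $\vr w$ and $\vr v_1$. At a point $(k\delta,0)$, adding $\vr w$ is legal because the inner product is $0$. At a point $(k\delta,\sqrt{1-\delta^2})$, adding $\vr v_1$ is legal as long as $k\delta^2\le 1-\delta^2$, and it returns to $((k+1)\delta,0)$. Running this until the condition fails reaches first coordinate of order $1/\delta$. We then finish with a move orthogonal to the iterate, such as a full $\pm\vr u_1$ at a point with $b=0$, which pushes the norm to about $1/\delta$.

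\textbf{Main obstacle.} The step I expect to be delicate is the boundary bookkeeping: the naive count only gives norm$^2 > (1-\delta^2)/\delta^2 + O(1)$, which is borderline against the target $1/\delta^2$. To close the gap I would exploit the freedom to use arbitrary scalings in $[-1,1]T$.
\begin{itemize}
\item First try replacing $\vr w$ by the full vector $\vr u_1$. This relaxes the legality condition for $\vr v_1$ to $k\delta^2\le\sqrt{1-\delta^2}$, which is strictly weaker.
\item The price is a drift in the second coordinate, which I would correct by choosing the scaling of the next $\vr u_1$-step appropriately.
\item The final step is an orthogonal $\vr u_1$-move, giving norm$^2 \ge (1-\delta^2)/\delta^2+1=1/\delta^2$.
\end{itemize}

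\textbf{Concatenation.} Run the planar sequence in $P_1$, then its copy in $P_2$ starting from the current iterate, and so on. During the $i$-th phase each added vector lies in $P_i$. So its inner product with the iterate equals its inner product with the $P_i$-component, which is exactly the planar iterate, and greediness is preserved. The final iterate has orthogonal components of norm at least $1/\delta$ each, so its norm is at least $\sqrt{d}/\delta$. Hence $G(T)\ge\sqrt{d}/\delta$.
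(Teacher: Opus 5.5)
Your set $T$ is the paper's, and your verification of $\delta_T=\delta$ and your concatenation argument are both correct. The paper instead interleaves the planes, moving along $\sum_i \vr{e}_{2i-1}$, which makes no real difference. The gap is exactly at the step you flagged: the planar bound $\ge 1/\delta$ is never established, and the repair you sketch fails as stated. After $+\vr{u}_1$ and $+\vr{v}_1$ you are at $(\delta,\,1-\sqrt{1-\delta^2})$, whose second coordinate is positive. The greedy rule therefore only allows $c\,\vr{u}_1$ with $c\le 0$, so no rescaling of the next $\vr{u}_1$-move can bring you back to height $1$. But the next $\vr{v}_1$-move from $(\delta,b)$ is legal only if $\delta^2\le b\sqrt{1-\delta^2}$, i.e.\ $b\ge \delta^2/\sqrt{1-\delta^2}$. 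This exceeds $1-\sqrt{1-\delta^2}$, since $\delta^2=(1-\sqrt{1-\delta^2})(1+\sqrt{1-\delta^2})$. So with a single $\vr{u}_1$-move between consecutive $\vr{v}_1$-moves the scheme stalls at first coordinate $\delta$. Your relaxed threshold $k\delta^2\le\sqrt{1-\delta^2}$ presupposes being at $(k\delta,1)$ before each $\vr{v}_1$-move, and the scheme as described never gets there.

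One repair within your approach is to make each cycle from $(k\delta,0)$ three moves: $+\vr{u}_1$, then $+\vr{v}_1$, then the downward reset $-(1-\sqrt{1-\delta^2})\,\vr{u}_1$. The reset has inner product $-(1-\sqrt{1-\delta^2})^2\le 0$ with the iterate, so it is legal. The cycle is legal iff $k\delta^2\le\sqrt{1-\delta^2}$ and ends at $((k+1)\delta,0)$. You therefore reach $(a,0)$ with $a>\sqrt{1-\delta^2}/\delta$, and a final $+\vr{u}_1$ gives squared norm $>(1-\delta^2)/\delta^2+1=1/\delta^2$.

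The paper's route is different. It keeps your original zigzag but scales one round by $\epsilon/\delta$, using moves $\frac{\epsilon}{\delta}\sqrt{1-\delta^2}\,\vr{u}_1$ and $\frac{\epsilon}{\delta}\vr{v}_1$ with net displacement $\epsilon\vr{e}_1$. It chooses $\epsilon\in(0,\delta]$ with $1/\delta-\epsilon$ a nonnegative integer multiple of $\delta$, so the sequence lands exactly at $(1/\delta,0)$ with no final orthogonal move. The $\vr{v}_1$-move of such a round from first coordinate $l$ is legal iff $l\le \epsilon(1-\delta^2)/\delta^2$. So the fractional round must be performed first, at $l=0$; after it, full rounds are legal up to $l=1/\delta-\delta$. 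The paper's claim is phrased for arbitrary order, but its inner-product computation drops a factor $\epsilon/\delta$, so the order does matter.
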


It is instructive to compare the bounds we achieve for greedy sequences to the best
uniform bounds achievable by any online algorithm. An essentially optimal
algorithm and bound in this regard has been given by
B{\'a}r{\'a}ny~\cite{barany1979class}. For a finite set $T \subset \R^d$ of
non-collinear vectors (pairs of distinct vectors in $T$ are linearly
independent) and target norm $\norm{\cdot}$, B{\'a}r{\'a}ny proved that the
discrepancy of the best online algorithm against sequences in $T$ is at most
$\max_{\vr y \in \{-1,1\}^T} \norm{\sum_{\vr t \in T} y_{\vr t} \vr t}$ and at
least $1/2$ this quantity. The upper bound, which in fact also holds for
sequences in $[-1,1] T$, is obtained by the following very simple online
algorithm: choose signs so as to maintain a representation of the iterates of
the form $\sum_{\vr t  \in T} y_{\vr t} \vr t$ with $\vr y \in [-1,1]^T$. By convexity, note that the worst-case norm is achieved when $y \in \{\pm 1\}^T$. Such a representation can be
maintained using the fact that if $y,\Delta y \in [-1,1]$ then either $y
- \Delta y$ or $y+\Delta y$ is also in $[-1,1]$. While simple and (perhaps
  surprisingly) optimal, it does not diminish the relevance of
understanding greedy sequences and their applications. Furthermore, maintaining
the requisite representation may become prohibitive if the universe $T$ is
large (i.e., $T = \{0,1\}^d$).  

Under the Euclidean norm, one can show that\footnote{Compute
${\mathbb E}_{\vr v \sim {\rm unif}({\cal S}^{d-1})}[\max_{\vr y \in \{-1,1\}^T}
\innerprod{\sum_{\vr t \in T} y_{\vr t} \vr t}{\vr v}]$ and then apply
Cauchy-Schwarz.} 
\[
\frac{\sqrt{2}}{\sqrt{\pi d}} \sum_{\vr t \in T} \norm{\vr t} \leq \max_{\vr y \in \{-1,1\}^T} \norm{\sum_{\vr
t \in T} y_{\vr t} \vr t} \leq \sum_{\vr t \in T} \norm{\vr t}.
\]
In particular, for $T \subseteq {\cal S}^{d-1}$, B{\'a}r{\'a}ny's bound under
the Euclidean norm is $\Theta_d(|T|)$, which is essentially independent of the
geometry of $T$. In contrast, bounds for greedy $T$-sequences must depend on
how ``well-separated'' the points in $T$ are on the sphere, which we encode
using the parameter $\delta_T$. An important similarity however is that both
the greedy and optimal bound are finite precisely when $T$ is finite. In the
concrete case of $T=\{0,1\}^d$, the optimal bound is $\Theta(\sqrt{d} 2^d)$
whereas our greedy upper bound is $d^{d^2/2+o(d^2)}$.

\para{\bf Overview of the Greedy Sequence Bound} We provide a detailed overview of our
approach to \Cref{thm:main}. The main idea is to a construct a bounded convex
set which contains all greedy $T$-sequences. To this end, we define a set $\KKK
\subseteq \R^d$ to be \emph{$T$-absorbing} if it contains the origin and if $\forall
\vr x \in \KKK$ and $\vr t \in T \cup -T$, $\innerprod{\vr x}{\vr t} \leq 0$
implies that $\vr x + \vr t \in \KKK$. It is direct to see by induction that
any such $K$ contains the iterates of all greedy $T$-sequences. If $K$ is also
convex, then this last statement automatically extends to $[-1,1]T$-sequences
as well, since then $\vr x \in K$ and $\vr x + \vr t \in K$ implies that $\vr x + [0,1] \vr t \subseteq K$. 

Our notion of $T$-absorbing above is in fact a specialization of
B{\'a}r{\'a}ny's notion of $T$-closedness~\cite{barany1979class}: $K$ is
$T$-closed if $\forall \vr x \in K$, $\forall \vr t \in T$, either $\vr x + \vr
t \in K$ or $\vr x - \vr t \in K$. B{\'a}r{\'a}ny associates every
online algorithm to a $T$-closed set containing the origin, which our
definition specializes to the greedy algorithm (note $T$-absorbing
$\Rightarrow$ $T$-closed). In the case that $T$ consists of non-collinear
vectors, B{\'a}r{\'a}ny approximately characterized the smallest $T$-closed
sets by showing that the convex hull of any $T$-closed set contains a shift of
the ``universal'' $T$-closed set $\{\sum_{\vr t \in T} y_t \vr t: \vr y \in
\{0,1\}^T\}$ (shifting does not affect $T$-closedness). For the
$T$-absorbing case, it is formally easier to define the smallest such set as
the intersection of all (convex) $T$-absorbing sets is again (convex)
$T$-absorbing. Note that these intersections are non-empty since $\R^d$ is
trivially $T$-absorbing and all $T$-absorbing sets contain the origin. Our
construction will crucially make use of some of the structural properties of
the smallest such sets, though they seem more difficult to
characterize (even approximately) than in the $T$-closed setting. 

A first property we shall exploit in our construction is that the smallest (convex)
$T$-absorbing set $\KKK_T$ is always origin symmetric, i.e., $\KKK_T =
-\KKK_T$. This follows since $-\KKK_T$ is also $T$-absorbing and thus $\KKK_T =
\KKK_T \cap -\KKK_T$ by minimality. From here, we restrict ourselves to convex $T$-absorbing
sets, which can be characterized in terms of their projections. Let $\proj{\vr
t}(\cdot)$ denote the orthogonal projection to the subspace orthogonal to $\vr
t$ and let $\interval {-\vr t}{\vr t}$ the line segment from $-\vr t$ to $\vr
t$. We show that for a convex set $\KKK$ containing the origin, $\KKK$ is
$T$-absorbing iff it is \emph{$T$-projective}, which we define as:
\begin{equation}
\proj{\vr t}(\vr x) + \interval{-\vr t}{\vr t} \subseteq \KKK, \forall \vr t \in T, \vr x \in \KKK, \label{def:projective-set}
\end{equation}
Usefully, if $\KKK$ is $T$-projective then so is its convex
hull, which we denote by $\convhull(\KKK)$. This implies that one only needs to
test the left hand side condition when $\vr x$ is a vertex of $\KKK$, which
helps simplify the analysis of our eventual construction.  

Using this definition, when $T \subset {\cal S}^{d-1}$ consists of unit
vectors, we give a recursive construction of an origin symmetric convex set $\KKK_T \subset
\spn{T}$ of radius bounded by $R_{\dim{T}} := (2/\delta_T)^{\dim{\spn T}-1}$.
The construction proceeds by induction on dimension. If $T$ has dimension $1$, then
$T = \{\pm \vr t\}$ and clearly the smallest $T$-projective set is $\KKK_T :=
\interval{-\vr t}{\vr t}$, which has radius $R_1 = 1$. 

\begin{figure}[b!] 
\centering
\includegraphics[width=5cm]{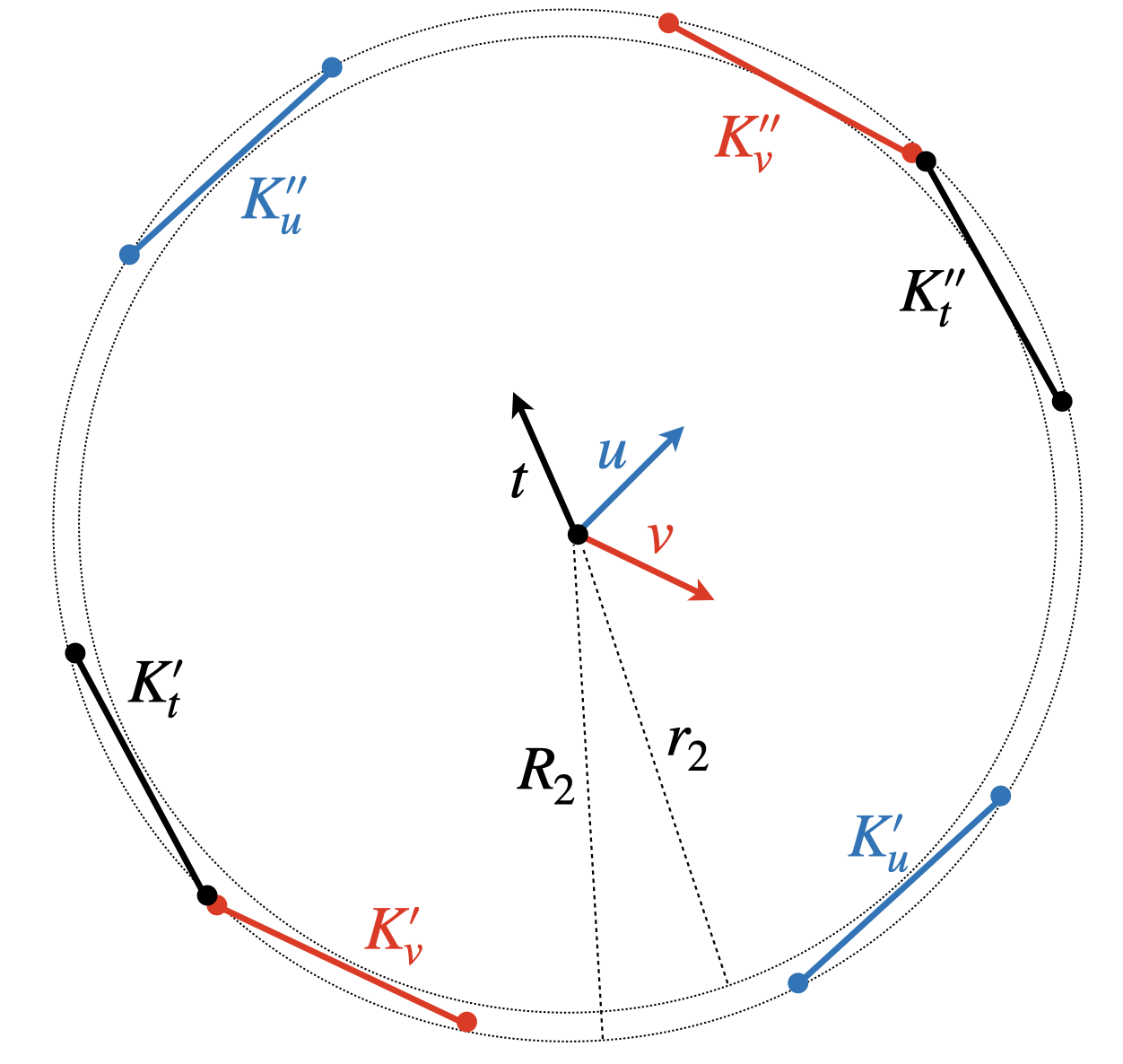}
\caption{Let $d=2$ and $T = \{\vr t, \vr u, \vr v, -\vr t, -\vr u, -\vr v\}$.
Our construction for $\KKK_T$ is the convex hull of 6 line segments:
$K'_{\vr t}, K''_{\vr t}, K'_{\vr u}, K''_{\vr u}, K'_{\vr v}, K''_{\vr v}$.}
\label{fig:K2}
\end{figure}

In the two dimensional case, the construction is again simple. We now derive
from first principles what the smallest $T$-projective convex set
$\KKK_T$ must look like in two dimensions, and then give a slightly larger
explicit construction. Let $\hypp {\vr t}$ denote the subspace orthogonal to
$\vr t \in T$. When $T$ is two dimensional, the restriction of $\hypp {\vr t}$
to $\spn{T}$ can be expressed as the span of a single unit vector, which we
denote by $\vr r_{\vr t}$. Consequently, the projection $\proj{\vr t}(\KKK_T)$
is then a symmetric interval of the form $\ell_{\vr t} \interval{-\vr r_{\vr t}}{\vr
r_{\vr t}}$ for some choice of length $\ell_t \geq 0$. The set of points
appearing on the left hand side of \eqref{def:projective-set} for a fixed $\vr
t \in T$ is the origin symmetric rectangle $\proj{\vr
t}(\KKK_T)+\interval{-{\vr t}}{{\vr t}} = \ell_{\vr t} \interval{-\vr r_{\vr
t}}{\vr r_{\vr t}} + \interval{-\vr t}{\vr t}$. We express this rectangle as the
convex hull of its edges in the direction of $\vr t$, which we denote by
$K_{\vr t}^{'} := \ell_{\vr t} \vr r_{\vr t} + \interval{-\vr t}{\vr t}$ and
$K_{\vr t}^{''} := -\ell_{\vr t} \vr r_{\vr t} + \interval{-\vr t}{\vr t}$.
Importantly, if $\KKK_T$ is $T$-projective, then so is
$\convhull\left(\bigcup_{\vr t \in T} \proj{\vr t}(\KKK_T) + \interval{-\vr
t}{\vr t}\right)$ as condition \eqref{def:projective-set} gets easier to
satisfy. Therefore, the smallest $\KKK_T$ must be the convex hull these
rectangles, and hence the only degree of freedom is choice of lengths $\ell_{\vr t}$,
$\vr t \in T$. For our construction, we use the uniform choice $\ell_{\vr t} =
r_2$, $\forall \vr t$. With this choice, condition~\eqref{def:projective-set}
is satisfied when $r_2$ is large enough to ensure that the edges $\{K_{\vr
t}^{'}, K_{\vr t}^{''}: \vr t \in T\}$ are all parwise interior disjoint. See
figure~\ref{fig:K2} for an illustration. Noting that in two dimensions
$\delta_T = \sin(\alpha_T)$, where $\alpha_T \in (0,\pi/2]$ is the minimum
non-zero angle between vectors in $\pm T$, it is not hard to check that setting
$r_2 = \cos(\alpha_T/2)/\sin(\alpha_T/2)$ is the minimum value for this
purpose. The final radius is then $R_2 = \sqrt{R_1^2+r_2^2} =
1/\sin(\alpha_T/2) \leq 2/\sin(\alpha_T) = 2/\delta_T$. 

Extending the above construction to $\dim{T} \geq 3$ is unfortunately more
complicated. Inspired by the above, the main idea is still to ``orthogonally
extend'' absorbing sets built from lower dimensional subsets of $T$ and take
convex hull. In two dimensions, we took each $1$-dimensional absorbing sets
$\interval{-\vr t}{\vr t}$ and added the orthogonal interval $r_2
\interval{-\vr r_{\vr t}}{\vr r_{\vr t}}$ to it. In higher dimensions, we
will examine subsets $U \subseteq T$ which are \emph{maximal} subject to
dimension, meaning $T \cap \spn{U} = U$, and which have strictly smaller
dimension than $T$. We denote this collection of subsets by ${\cal U}_T$. By
the induction hypothesis, we have already constructed a $U$-absorbing set
$\KKK_U \subset \spn{U}$ of radius $R_{\dim{U}}$. It is tempting at this point
to restrict attention only to sets $U \in {\cal U}_T$ with dimension precisely
$\dim{T}-1 := d-1$. In this case, the orthogonal complement $\spn{U}^\perp$
restricted to $T$ is again one dimensional, spanned a unit vector $\vr r_{U}$.
The direct extension of the two dimensional construction would then be to
define $\KKK_T := \convhull\left(\bigcup_{U \in {\cal U}_T, \dim{U}=d-1}
\KKK_U + r_{d} \interval{-\vr r_U}{\vr r_U}\right)$ for the appropriate
choice of $r_{d}$. Unfortunately, with this construction, it is unclear
why $\KKK_T$ is $T$-projective for any choice of $r_{d}$. For a pair
$(\vr t,U)$, where $\vr t \notin U$, it is not even clear why 
$\proj{\vr t}(\KKK_U) \subseteq \KKK_T$, 
since the main thing we know about $\KKK_U$ is the bound of $R_{d-1}$ on its radius.

To remedy this issue, we will need to orthogonally extend sets $\KKK_U$ of all
dimensions in the construction of $\KKK_T$, where lower dimensional sets
$\KKK_U$ are extended in more directions. Specifically, to each set $\KKK_U$,
$U \in {\cal U}_T$, we add the entire ball $\ball{\vr 0}{r_{d}}$ restricted to
the orthogonal subspace $\spn{U}^\perp \cap \spn{T}$. We then take the convex
hull of all these sets to construct $\KKK_T$. Note that choosing $r_d \geq R_{d-1}$, 
we at least resolve the last issue in the above paragraph since
$\proj{\vr t}(\KKK_U) 
\subseteq 
\proj{\vr t}(\ball{\vr 0}{R_{d-1}} \cap \spn{U} 
)
\subseteq
\ball{\vr 0}{r_d} \cap \hypp{t} \cap \spn{T} \subseteq \KKK_T$. 
A similar
choice of $r_d = R_{d-1} \cos(\alpha_T/2)/\sin(\alpha_T/2)$, where $\alpha_T$
is now the minimum non-zero angle between vectors and subspaces spanned by $T$,
can now be shown to work, however the proof is more complicated. Interestingly,
unfolding the construction, one can precisely characterize the basic building blocks
whose convex hull form $\KKK_T$, which are each indexed by a chain of subspaces
spanned by $T$. We defer the additional details to \Cref{sec:result}.        

\subsection{Generalization to Vector Partitioning} 
We now consider the following
partition generalization of the online vector balancing game, which our results naturally extend to. Let a finite
vector set $T\subseteq \mathbb R^d$ be given together with a number $p \geq 2$
of desired partition pieces. The game between Pusher and Chooser starts with
the trivial partition $J_1\dot\cup\ldots \dot \cup J_p$ with $J_i=\emptyset$
for all $i\in \{1,\ldots ,p\}$. Pusher picks a vector $\mathbf t_k\in T$ at
time $k$, and Chooser has to assign the vector $\mathbf t_k$ to one of the
partition pieces by adding the index $k$ to one of the sets $J_i$, that is, the
set $J_i$ is replaced by the set $J_i\dot \cup \{k\}$. Pusher seeks to maximize
and the Chooser minimize the largest difference between the sums of vectors
indexed by pieces of the partition. Letting $[p] := \{1,\dots,p\}$, the value of the partition $J_1 \dot \cup \dots \dot \cup J_p$ is then defined as
\[
\mathrm{val}(J_1,\ldots, J_p):= \max_{n \geq 1} \max_{i_1 \in [p]}\max_{i_2 \in [p]} \left\|\sum_{j\in J_{i_1} \cap [n]}\mathbf t_j-\sum_{j\in J_{i_2} \cap [n]}\mathbf t_j\right\|.
\]

It is straightforward to observe that the case $p=2$ is equivalent to the original vector balancing game
. Indeed, if $J_1$ resp.~$J_2$ are the set of vectors to which Chooser assigns a plus sign $s_j=1$ resp.~a minus sign $s_j=-1$, then the value of the game after the $n$-th vector is
equal to $   
   \left\|
   \sum_{j\in J_1}\mathbf t_j-
   \sum_{j\in J_2}\mathbf t_j\right\|=
   \left\|\sum_{j=1}^ns_j\mathbf t_j\right\|$.

The vector partitioning problem (in the offline setting) was first introduced by B{\'a}r{\'a}ny and
Doerr~\cite{barany2006balanced,barany2008power}. They proved that one
can always construct a partition $J_1\dot\cup\ldots \dot \cup J_p$ 
such that $\mathrm{val}(J_1,\ldots, J_p) = O(d \max_{\vr t \in T}
\norm{\vr t})$, for any norm $\norm{\cdot}$. An extension of the greedy
algorithm to vector partitioning was analyzed by Aru, Narayanan, Scott, and
Venkatesan~\cite{aru2018balancing} in the stochastic setting, where they showed
that greedy achieves asymptotically near-optimal norm bounds when the vectors $\vr
t_1,\vr t_2,\dots$ are distributed iid from a ``nice'' distribution on the unit
ball. In this setting, the greedy algorithm, which they dubbed the \emph{inner
product rule}, proceeds as follows: at step $k$ assign $\vr t_k$ to a partition
piece with which it has the smallest inner product, i.e., to any piece in ${\rm
argmin}_{i \in [p]} \innerprod{\sum_{j \in J_i} \vr t_j}{\vr t_k}$.
We call any partition consistent with a run of the greedy algorithm a \emph{greedy
partition}. One can easily verify that for a greedy partition $J_1 \dot \cup J_2$
when $p=2$, assigning $+1$ to the elements $J_1$ and $-1$ to those $J_2$
yields a signing consistent with a run of the greedy vector balancing
algorithm.

In the online context, our main result is that greedy's value for partitioning
is at most its vector balancing value. That is, the $p=2$ case is the hardest case
of the game. 

\begin{restatable}{theorem}{greedypartition}\label{thm:greedy-partition}
Let $p \geq 2$, $\vr t_1, \dots, \vr t_n$ be an input sequence from $T
\subseteq \R^d$. Then, any greedy partition $J_1 \dot \cup \dots \dot \cup
J_p = [n]$ for $\vr t_1,\dots, \vr t_n$ satisfies $\mathrm{val}(J_1,\dots,J_p) \leq G(T)$, where $G(T)$ is the greedy
vector balancing bound as defined in \Cref{thm:main}.   
\end{restatable}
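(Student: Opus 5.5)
The plan is to reduce each pairwise difference of partition sums to a greedy vector balancing sequence over $[-1,1]T$. Fix a greedy partition $J_1 \dot\cup \dots \dot\cup J_p = [n]$ for $\vr t_1,\dots,\vr t_n$ and fix two indices $i_1, i_2 \in [p]$ (if $i_1 = i_2$ the difference is $0$). For $k = 0,1,\dots,n$ let $\vr S_i^{(k)} := \sum_{j \in J_i \cap [k]} \vr t_j$ and set $\vr D_k := \vr S_{i_1}^{(k)} - \vr S_{i_2}^{(k)}$. By the definition of $\mathrm{val}$, it suffices to show that $\norm{\vr D_k} \leq G(T)$ for every $k$. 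I will do this by showing that $\vr 0 = \vr D_0, \vr D_1, \dots, \vr D_n$ is a greedy sequence for $[-1,1]T$, which is exactly the class of sequences whose norm $G(T)$ bounds.

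The key step is a case analysis of $\vr D_k - \vr D_{k-1}$ according to the piece receiving $\vr t_k$. The greedy (inner product) rule assigns $\vr t_k$ to a piece $i \in {\rm argmin}_{i'} \innerprod{\vr S_{i'}^{(k-1)}}{\vr t_k}$.
\begin{itemize}
\item If $i = i_1$, then $\vr D_k = \vr D_{k-1} + \vr t_k$. Minimality gives $\innerprod{\vr S_{i_1}^{(k-1)}}{\vr t_k} \leq \innerprod{\vr S_{i_2}^{(k-1)}}{\vr t_k}$, so $\innerprod{\vr D_{k-1}}{\vr t_k} \leq 0$. This is a valid greedy step with sign $+1$.
\item If $i = i_2$, then $\vr D_k = \vr D_{k-1} - \vr t_k$. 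Symmetrically, $\innerprod{\vr D_{k-1}}{-\vr t_k} \leq 0$, so this is a valid greedy step with sign $-1$.
\item If $i \notin \{i_1,i_2\}$, then $\vr D_k = \vr D_{k-1} + 0\cdot \vr t_k$. The step vector $0 \cdot \vr t_k$ lies in $[-1,1]T$ and has zero inner product with $\vr D_{k-1}$, so it is trivially a greedy step.
\end{itemize}
Alternatively, one can simply delete these idle steps, since repeating an iterate does not change the set of norms attained.

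Hence $(\vr D_k)_k$ is consistent with a run of the greedy balancing algorithm on the input $(a_k \vr t_k)_k$, where $a_k \in \{0,1\}$. Its norm is therefore at most $G(T)$. Taking the maximum over $k$ and over the pairs $i_1,i_2$ gives $\mathrm{val}(J_1,\dots,J_p) \leq G(T)$.

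The only point needing care is definitional rather than mathematical: $G(T)$ must be understood as the maximum norm of greedy $[-1,1]T$-sequences for a general $T \subseteq \R^d$, as in \Cref{cor:scaling}, and not only for $T \subseteq {\cal S}^{d-1}$. The zero-scaled steps must also be admissible. Both hold because $0 \in [-1,1]T$ and the greedy condition $\innerprod{\vr x}{\vr 0} \leq 0$ holds trivially. Ties in the argmin cause no difficulty, since the greedy condition is a non-strict inequality.
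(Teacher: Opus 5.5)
Your proof is correct and follows essentially the same route as the paper. The paper shows that restricting a greedy $p$-partition to two pieces $i_1,i_2$ gives a greedy two-column $T$-matrix sequence; its definition permits idle ``stutter'' steps precisely for your case $i\notin\{i_1,i_2\}$. It then shows that the column difference, once the idle steps are deleted, is a greedy $T$-sequence, so its norm is bounded by $G(T)$; your direct case analysis on $\vr D_k$ is the same argument.
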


The result follows from the simple but powerful observation that any two pieces
$J_k,J_l$ of a greedy $p$-partition for $\vr t_1,\dots,\vr t_n$ also form a
greedy $2$-partition of the subsequence induced by $J_k \cup J_l$. That is, the
greedy partitioning algorithm is consistent with a run of the greedy vector
balancing on any two of its pieces. The bound therefore follows blackbox from
the vector balancing case. We note that B{\'a}r{\'a}ny's online vector
balancing upper bound also extends to partitioning via a natural round-robin
extension of his balancing algorithm.  

\subsection{Application to Scenario Scheduling} 
\label{sec:intro-scheduling}

We now detail an application of the greedy vector partitioning bound to the
problem of total completion time scheduling under scenarios introduced by
Bosman, van Ee, Ergen, Imreh, Marchetti-Spaccamela, Skutella and
Stougie~\cite{bosman2025total}. In this problem, we are given $n$ weighted jobs
with unit processing times to be scheduled on $p$ identical machines, together
with $d$ scenarios $S_i \subseteq [n]$, $i \in [d]$, 
each of them represented by the subset
of jobs that are present in that scenario. A global assignment of jobs to
machines is adapted to any given scenario simply by restricting it to the jobs
present. The goal is to compute a global assignment of jobs to machines that
minimizes the average sum of weighted completion times over all jobs, where the
average is taken over scenarios. The formal description is given below: 

\begin{center}
\noindent\fbox{%
    \parbox{0.95\textwidth}{%
     \textsc{  MinAvgSTC($d$)}
   
       \textit{Input:} Numbers $n,m\in \mathbb{N}$ of jobs resp.~machines, weights $w_j$ for $j\in [n]$ that are non-increasing in $j$, scenarios $S_k\subseteq [n]$ for $k\in [d]$.

       \textit{Task:} Find a partition $[n]=J_1\dot\cup\ldots\dot\cup J_m$ of jobs so as to minimize 
        \begin{align}\label{eq:schedobj}
           \mathrm{obj}(J_1,\ldots,J_m)\coloneqq \sum_{k=1}^d \sum_{i=1}^m \sum_{j\in J_i\cap S_k}w_j\cdot|J_i\cap S_k\cap [j]\}|.
        \end{align}
    }%
}    
\end{center}

When there is only one scenario, it is classical that scheduling the jobs in a
round robin fashion in non-increasing order of weight is
optimal~\cite{eastman1964bounds}. \cite{bosman2025total} showed that problem
is NP-hard when the number of scenarios is part of the input even when $m=2$.
However, they conjectured that the problem is polynomial for any fixed number
of scenarios. Furthermore, they provided a polynomial time dynamic programming
algorithm that solves the problem under the conjectured existence of a
\emph{balanced optimal} solution, which we state below:

\begin{conjecture} [\cite{bosman2025total}]
  \label{conj:minsum}
\textsc{MinAvgSTC$(d)$} has an optimal solution $(J_1,\ldots, J_m)$ such that for every scenario $k\in [d]$ and each $j\in [n]$, the $j$ largest jobs are assigned to the machines in such a way that the difference in number of jobs assigned to each pair of machines is bounded by a function $g(d)$ of $d$ only, or more formally 
\[ 
\max_{j\in [n], k\in [d]} \left\{  \max_{i\in [m] } |J_i\cap S_k\cap [j]| - \min_{i \in [m]} |J_i\cap S_k\cap [j]| \right\} \leq g(d). 
\] 
\end{conjecture}

\cite{bosman2025total} proved the conjecture for the case of unit weights
$w_1=\cdots=w_n=1$. Using the greedy partitioning, we
prove the conjecture in the complementary regime of exponentially decreasing
weights:

\begin{restatable}{theorem}{schedulingconj}\label{thm:scheduling-conj}
Conjecture \ref{conj:minsum} holds in the special case where $w_j\geq 2dw_{j+1}$ for all $j\in [n-1]$.  
\end{restatable}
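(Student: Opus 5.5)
The plan is to show that, under the condition $w_j \geq 2d\,w_{j+1}$, every optimal solution (after harmless modifications) is a \emph{greedy partition} in the sense of \Cref{thm:greedy-partition}. We then read off the balance guarantee from that theorem applied to $T=\{0,1\}^d$, which gives $g(d) := G(\{0,1\}^d) \leq d^{d^2/2+o(d^2)}$ via \Cref{cor:scaling} and the Alon--Vu bound.

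\textbf{Encoding as online partitioning.} Encode job $j$ by $\vr t_j \in \{0,1\}^d$, where $(\vr t_j)_k = 1$ iff $j \in S_k$. Feed the jobs in the order $1,2,\dots,n$. For a partition $J_1\dot\cup\dots\dot\cup J_m$, set $\vr x_i^{(j-1)} := \sum_{l \in J_i \cap [j-1]} \vr t_l$. Then the contribution of job $j$ to \eqref{eq:schedobj}, when $j \in J_i$, is
\[
w_j \sum_{k : j \in S_k} \bigl(|J_i\cap S_k\cap[j-1]|+1\bigr) = w_j\bigl(\innerprod{\vr x_i^{(j-1)}}{\vr t_j} + \norm{\vr t_j}^2\bigr).
\]
Hence the machine minimising job $j$'s own cost, given the earlier assignment, is exactly one in $\mathrm{argmin}_i \innerprod{\vr x_i^{(j-1)}}{\vr t_j}$. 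This is the inner product rule.

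\textbf{Key exchange step.} Take an optimal solution and let $j$ be the first job not placed according to the inner product rule. Move $j$ alone to a greedy machine and keep all other assignments fixed.
\begin{itemize}
\item Job $j$'s own cost drops by at least $w_j$, since the inner products are integers and the original choice was strictly worse.
\item Jobs $l<j$ are unaffected.
\item For a later job $l>j$, each count $|J_i\cap S_k\cap[l]|$ changes by at most $1$ per scenario, since only job $j$ moved. So job $l$'s cost changes by at most $d\,w_l \leq d\,w_j (2d)^{-(l-j)}$.
\end{itemize}
Summing the geometric series, the total increase over later jobs is at most $d\,w_j \cdot \frac{1/(2d)}{1-1/(2d)} = \frac{w_j}{2d-1} < w_j$ when $w_j>0$. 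The modified solution is therefore strictly better, a contradiction. Hence every optimal solution is greedy at every job with positive weight.

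If some $w_j = 0$, then all later weights vanish as well, so the assignment of those jobs does not affect the objective. We reassign them by the inner product rule without changing the objective. The result is an optimal solution that is a greedy partition of the whole sequence $\vr t_1,\dots,\vr t_n$.

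\textbf{Conclusion and main obstacle.} By \Cref{thm:greedy-partition}, for every prefix $[j]$ and machines $i_1,i_2$ we have $\norm{\vr x_{i_1}^{(j)}-\vr x_{i_2}^{(j)}} \leq G(\{0,1\}^d)$. The $k$-th coordinate of this difference is $|J_{i_1}\cap S_k\cap[j]| - |J_{i_2}\cap S_k\cap[j]|$, and it is bounded by the Euclidean norm. This gives the conjecture's inequality with $g(d) = G(\{0,1\}^d)$.

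The only delicate point is the exchange step. One must check that moving a single earlier job perturbs each later job's cost by at most $d\,w_l$, and that the ratio $2d$ makes the tail sum strictly less than the integral gain $w_j$. Both follow from the computation above.
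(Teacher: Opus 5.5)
Your proposal is correct and follows the paper's proof essentially step for step. The paper's steps are the identity showing that the prefix-optimal placement of job $j$ is the inner-product rule (Lemma~\ref{lem:schedisbalancing}), the first-deviation exchange argument with integral gain $w_j$ against a tail loss of at most $d(w_{j+1}+\dots+w_n)$ (Lemma~\ref{lem:optislex}), and then \Cref{thm:greedy-partition} with $T=\{0,1\}^d$. Your explicit handling of zero weights is a small improvement, since the paper's strict inequality fails when $w_j=0$.

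One arithmetic slip needs fixing: the tail bound is $\frac{d\,w_j}{2d-1}$, not $\frac{w_j}{2d-1}$. This is $<w_j$ for $d\ge 2$. For $d=1$ strictness comes only from the sum being finite: $d\,w_j\sum_{s=1}^{n-j}(2d)^{-s}<\frac{d\,w_j}{2d-1}\le w_j$.
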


The theorem implies that the dynamic program proposed by Bosman et
al.~\cite{bosman2025total} can compute an optimal solution for the above choice of
weights with a running time of $O(nm^{O(g(d))})$, which is polynomial when the
number $d$ of scenarios is fixed.

For the weights as above, we in fact show that \textsc{MinAvgSTC($d$)} becomes
a lexicographic minimization problem. Namely, the optimal solution minimizes
the average completion time of the largest weight job, and subject to this,
minimizes the average completion time of the second largest weight job, etc. With this interpretation in mind, one can prove that the partition $J_1
\cdot \cup \dots \cdot \cup J_p = [n]$ produced by the optimal solution is a
greedy partition with respect to the vectors $\vr t_1,\dots, \vr t_n$, where
$\vr t_i \in \{0,1\}^d$ is the indicator vector of the scenarios that job $j$
participates in. The greedy vector balancing bound for $T=\{0,1\}^d$ then
directly implies the conjecture in this case. We note that we still require the
dynamic program to compute the optimal solution in this setting as there can be
exponentially many non-equivalent greedy partitions.

\subsection{Organization}

In \Cref{sec:result}, we provide the full proof of \Cref{thm:main}, our greedy
vector balancing bound. In \Cref{sec:partition}, we prove the greedy
partitioning bound from \Cref{thm:greedy-partition}. In \Cref{sec:scheduling},
we prove \Cref{thm:scheduling-conj}, corresponding to the case of
\Cref{conj:minsum} for exponentially decreasing weights. Lastly, in
\Cref{sec:lower-bound}, we prove the simple lower on greedy vector balancing in
terms of the parameter $\delta_T$, corresponding to the proof of
\Cref{lem:greedy-lower-bound}.

\section{Greedy Balancing Bound}  \label{sec:result}

In this section we prove our main result, \Cref{thm:main}, stated in
\Cref{sec:intro}:
\greedybalancing*


\para{Proviso}
Throughout this section let $T \subset \R^m$ be a finite nonempty set of vectors
satisfying
$-T = \setof{-\vr t}{\vr t\in T} \subseteq T$.
We assume that
all vectors $\vr t\in T$ satisfy $\norm{\vr t}= 1$, and
that $\delta\in (0,1]$ is a constant such that
$T$ satisfies the $\delta$-distance property.
We conveniently use another constant $\eps := \sin(\alpha/2)$ where
$\delta = \sin(\alpha)$ (see Fig.~\ref{fig:triangle}, left).
Thus $\alpha$ 
is the minimal possible \emph{angle} between
$\vr t\in T$ and $\spn U$, for $\vr t \in T\setminus \spn U$ and $U\subset T$.
Let $d=\dim{\spn T}$. 
We do not assume $d=m$ as the proof will induct through subsets of a given $T\subset \R^m$.
Finally, let $R_i := (1/\eps)^{i-1}$ for $i\in [d]$.



\begin{figure}[t!] 
\centering
\begin{minipage}{.4\linewidth}
\includegraphics[width=5cm]{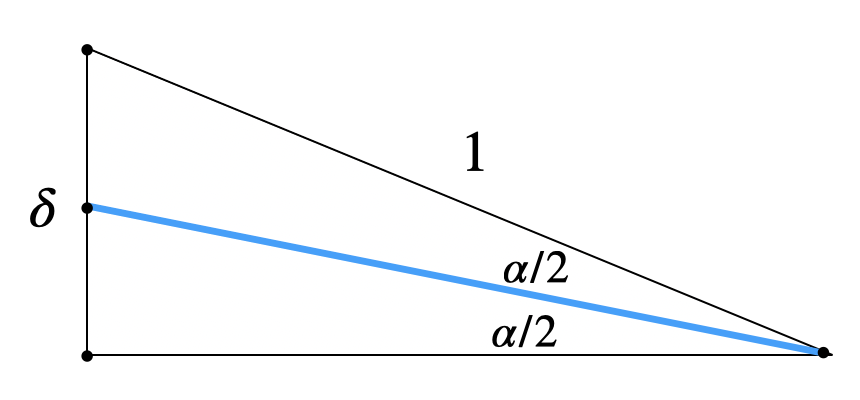}
\end{minipage}
\begin{minipage}{.4\linewidth}
\vspace{1cm}
\includegraphics[width=5cm]{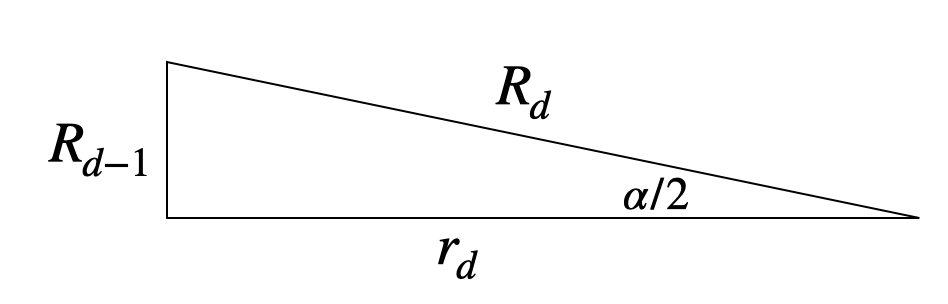}
\end{minipage}
\caption{Left: def. of $\eps := \sin(\alpha/2)$.  
Right: radiuses of balls in dimension $d-1$ and $d$.}
\label{fig:triangle}
\end{figure}

%



\medskip

We write $\hypp U$ for the subspace of $\R^m$ orthogonal to the set $U$ of vectors.
A special case of $\hypp {\{\vr t\}}$ we write as $\hypp {\vr t}$.
%
We denote by $\proj{H} : \R^m \to H$ the projection onto a subspace 
$H\subseteq \R^m$.
As a special case, the projection along a non-zero vector $\vr t \in\R^m$,
that is, the projection $\R^m \to \hypp {\vr t}$ onto the \emph{hyperplane} $\hypp {\vr t}\subseteq\R^m$
orthogonal to $\vr t$, is denoted as $\proj{\vr t}$ instead of $\proj{\hypp {\vr t}}$.
We write $\slball {\vr x} r$ and $\slsphere {\vr x} r$ to denote the closed ball
of radius $r$, 
and the corresponding sphere:
$\slball {\vr x}{r}  := \setof{\vr y\in \R^m}{\norm{\vr y - \vr x}\leq r}, 
\ 
\slsphere {\vr x}{r}  := \setof{\vr y\in \R^m}{\norm{\vr y - \vr x}= r}.$

\begin{definition} \rm
A convex subset $\KKK \subseteq \R^m$ satisfying $\vr 0 \in \KKK$ is 
\begin{itemize}
\item
\emph{$T$-absorbing} if for every vector 
$\vr x \in \KKK$
and every $\vr t \in T$ such that the inner product 
$\innerprod{\vr x}{\vr t}\leq 0$, we have
$\vr x + \vr t \in \KKK$;
\item 
\emph{$T$-projective} if for every $\vr t \in T$ the set 
$\proj{\vr t}(\KKK) + \interval {-\vr t}{\vr t}$ is included in $\KKK$,
where $\interval {-\vr t}{\vr t}$ is the closed line segment with endpoints 
$-\vr t$ and $\vr t$.
\end{itemize}
\end{definition}

The two notions are equivalent for convex sets, but we need one of implications only:
\begin{lemma} \label{lem:+-t}
Every $T$-projective convex set $\KKK\subseteq \R^m$ is $T$-absorbing.
\end{lemma}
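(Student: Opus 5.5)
Fix $\vr x \in \KKK$ and $\vr t \in T$ with $\innerprod{\vr x}{\vr t} \leq 0$; we must show $\vr x + \vr t \in \KKK$. The plan is to write $\vr x$ as its projection onto $\hypp{\vr t}$ plus a component along $\vr t$, and then show that adding $\vr t$ gives a point on the segment between $\vr x$ and a point supplied by $T$-projectivity. Convexity of $\KKK$ finishes the argument.

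Since $\norm{\vr t}=1$, we have $\vr x = \proj{\vr t}(\vr x) + a\vr t$ with $a := \innerprod{\vr x}{\vr t} \leq 0$, and hence $\vr x + \vr t = \proj{\vr t}(\vr x) + (a+1)\vr t$. There are two cases.

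\textbf{Case $a \geq -1$.} Here $a+1 \in [0,1]$, so $(a+1)\vr t \in \interval{-\vr t}{\vr t}$. Then $T$-projectivity gives $\vr x + \vr t \in \proj{\vr t}(\KKK) + \interval{-\vr t}{\vr t} \subseteq \KKK$ directly.

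\textbf{Case $a < -1$.} The point $\vr y := \proj{\vr t}(\vr x) - \vr t$ lies in $\KKK$ by $T$-projectivity. We now write $\vr x + \vr t$ as a convex combination $\lambda \vr x + (1-\lambda)\vr y$. The $\hypp{\vr t}$-components agree automatically, so only the coefficient of $\vr t$ needs to match:
\[
\lambda a - (1-\lambda) = a+1 .
\]
Solving gives $\lambda = (a+2)/(a+1)$, which equals $1 - 1/|a+1|$ when $a+1<0$. Since $a<-1$, this lies in $[0,1)$ exactly when $|a+1| \geq 1$, i.e.\ when $a \leq -2$.

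\textbf{Sub-case $-2 < a < -1$.} Here $a+1 \in (-1,0)$, so $(a+1)\vr t \in \interval{-\vr t}{\vr t}$ again, and $T$-projectivity applies directly as in the first case.

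So the cleanest split is by whether $|a+1| \leq 1$ (direct use of projectivity) or $a+1 < -1$ (convex combination of $\vr x$ and $\vr y$). One could also describe this as moving along the line from $\vr x$ in direction $\vr t$ and landing between $\vr x$ and the projective image $\vr y$.

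Finally, the definitions in this section require $\vr 0 \in \KKK$ for both notions, so that condition carries over. There is no real obstacle. The only point needing care is the bookkeeping of the coefficient of $\vr t$, which relies on $\norm{\vr t} = 1$.
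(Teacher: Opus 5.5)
Your proof is correct and follows the same idea as the paper: write $\vr x = \proj{\vr t}(\vr x) + a\vr t$ with $a \le 0$, place $\vr x+\vr t$ on the line through $\proj{\vr t}(\vr x)$ in direction $\vr t$, and conclude with projectivity and convexity. The paper avoids your case split by choosing the other endpoint, $\proj{\vr t}(\vr x)+\vr t$: since $a \le a+1 \le 1$ for every $a \le 0$, the point $\vr x+\vr t$ is always a convex combination of $\vr x$ and $\proj{\vr t}(\vr x)+\vr t$.
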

\begin{proof}
Take any $\vr t \in T$
and $\vr x\in \KKK$ such that $\innerprod{\vr x}{\vr t}\leq 0$.
The point $\vr x + \vr t$ belongs to the line $\ell$ orthogonal to $\hypp {\vr t}$ 
passing through $\proj{\vr t}(\vr x)$.
Decompose $\vr x$ as $\vr x= \proj{\vr t}(\vr x) + \vr s$, where $\vr s$ is orthogonal
to $\hypp{\vr t}$, to deduce that $\innerprod{\vr s}{\vr t} \leq 0$.
Thus $\vr x + \vr t$ is on $\ell$
between points $\vr x$ and $\proj{\vr t}(\vr x) + \vr t$,
the latter point also belonging to $\KKK$ by assumption.
The point $\vr x + \vr t$ is thus a convex combination of two elements of $\KKK$,
and hence belongs to $\KKK$, due to its convexity.
\end{proof}

We derive \Cref{thm:main} immediately using the following result (combined with
\Cref{lem:+-t}):
\begin{theorem} \label{thm:abs}
For every $T\subset \R^m$ and $\delta > 0$ as in Proviso,
there is a $T$-projective set $\KK T\subseteq\spn T$ included in the ball 
$\slball {\vr 0} {R_d}$.
\end{theorem}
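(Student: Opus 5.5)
We argue by induction on $d=\dim{\spn T}$, following the construction sketched in the introduction. For $d=1$ we have $T=\{\pm\vr t\}$ and $\KK T:=\interval{-\vr t}{\vr t}$ is $T$-projective with radius $1=R_1$. For $d\ge 2$, let ${\cal U}_T$ be the family of subsets $U\subseteq T$ with $T\cap\spn U=U$ and $\dim{\spn U}<d$ (including $U=\emptyset$, for which we set $\KK\emptyset:=\{\vr 0\}$). Each such $U$ again satisfies the Proviso, with the same $\delta$, so induction gives a $U$-projective $\KK U\subseteq\spn U\cap\slball{\vr 0}{R_{\dim{\spn U}}}$. We then define
\[
\KK T:=\convhull\Big(\bigcup_{U\in{\cal U}_T}\KK U+\big(\slball{\vr 0}{r_d}\cap\spn U^\perp\cap\spn T\big)\Big),\qquad r_d:=R_{d-1}\frac{\cos(\alpha/2)}{\sin(\alpha/2)}.
\]
The radius bound is by Pythagoras, since the two summands are orthogonal: $R_{d-1}^2+r_d^2=R_{d-1}^2/\eps^2=R_d^2$.

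Next we verify that $\KK T$ is $T$-projective. The set $\proj{\vr t}(\KK T)+\interval{-\vr t}{\vr t}$ is convex, and projection commutes with convex hulls. So it suffices to show that for every generator point $\vr x=\vr y+\vr z$, with $\vr y\in\KK U$ and $\vr z$ in the orthogonal ball, and every $\vr t\in T$, the segment $\proj{\vr t}(\vr x)+\interval{-\vr t}{\vr t}$ lies in $\KK T$. We split into two cases.

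If $\vr t\in U$, then $\vr z\perp\vr t$, so $\proj{\vr t}(\vr x)=\proj{\vr t}(\vr y)+\vr z$. By $U$-projectivity of $\KK U$ we have $\proj{\vr t}(\vr y)+\interval{-\vr t}{\vr t}\subseteq\KK U$, and adding $\vr z$ keeps us inside the generator for $U$.

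If $\vr t\notin U$, let $U'=T\cap\spn{U\cup\{\vr t\}}$, which is in ${\cal U}_T$ when $\dim{\spn U}<d-1$, and equals $T$ otherwise. Here we want to exhibit the segment as lying in the convex hull of generators coming from $U$ and from supersets/subsets. The natural approach is to write the endpoints $\proj{\vr t}(\vr x)\pm\vr t$ as convex combinations of points of the form $\vr y'+\vr z'$ with $\vr y'\in\KK U$ and $\norm{\vr z'}\le r_d$, $\vr z'\perp\spn U$. Decompose $\vr t=\vr t_U+\vr t_\perp$ with $\vr t_U\in\spn U$ and $\norm{\vr t_\perp}\ge\sin\alpha$, by the $\delta$-distance property. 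Then the component of $\proj{\vr t}(\vr x)\pm\vr t$ orthogonal to $\spn U$ must have norm at most $r_d$, while its $\spn U$-component has to stay in a suitable dilate/shift that can be absorbed by mixing with the generator of the trivial subspace (the ball of radius $r_d$ in $\spn T$, whose radius is $\ge R_{d-1}$).

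The key quantitative step, and the main obstacle, is the two-dimensional picture in the plane spanned by $\vr t_\perp$ and the relevant direction of $\vr z$. There one must check that moving by $\pm\vr t$ after projecting does not leave the "cylinder" over $\KK U$ of radius $r_d$. This is where the choice $r_d=R_{d-1}\cot(\alpha/2)$ enters: it is the half-angle computation from the $d=2$ case (adjacent edges $K'_{\vr t}$ being interior disjoint), scaled by $R_{d-1}$. For this step I would first try to reduce to a convex combination of the $U$-generator and the $\emptyset$-generator, and verify the inequality via the angle bound $\angle(\vr t,\spn U)\ge\alpha$. If that fails, I would replace $U$ by a chain $U\subset U'$ and induct along chains of subspaces, as the introduction hints. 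Throughout, the projections of $\KK U$ are controlled via $\proj{\vr t}(\KK U)\subseteq\slball{\vr 0}{R_{d-1}}\cap\hypp{\vr t}\cap\spn T\subseteq\KK T$, using $r_d\ge R_{d-1}$.
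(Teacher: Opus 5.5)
Your setup matches the paper's. You use induction on the dimension and the same $r_d=R_{d-1}\cot(\alpha/2)$. Your generators $\KK U+(\slball{\vr 0}{r_d}\cap\spn U^\perp\cap\spn T)$ are the convex hulls of the paper's $L_U=R_U+\KK U$. Your extra generator for $U=\emptyset$ is harmless, since its projection along $\vr t$ plus $\interval{-\vr t}{\vr t}$ is the generator of $\{\pm\vr t\}$. The radius bound, the reduction to generator points, and the case $\vr t\in U$ (the paper's Case~1) are all correct.

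\textbf{The gap.} The case $\vr t\notin U$ is the whole difficulty, and there you give only a plan whose first step is false. Already for $d=2$ it fails. Take $T=\{\pm\vr u,\pm\vr t\}$ with $\vr u=(1,0)$, $\vr t=(\tfrac12,\tfrac{\sqrt3}{2})$ and $\delta=\tfrac{\sqrt3}{2}$, so $r_2=\sqrt3$ and $R_2=2$. The corner $\vr x=(-1,\sqrt3)$ of the $U$-generator $[-1,1]\times[-\sqrt3,\sqrt3]$ has $\proj{\vr t}(\vr x)=(-\tfrac32,\tfrac{\sqrt3}{2})$. Projectivity therefore requires $\proj{\vr t}(\vr x)-\vr t=(-2,0)\in\KK T$. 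But both the $U$-generator and the disc of radius $\sqrt3$ have support at most $\sqrt3<2$ in direction $(-1,0)$. So this point is not in the hull of the $U$- and $\emptyset$-generators; only the generator of $\{\pm\vr t\}$ covers it. In general, for $\vr t\notin U$ the segment must be placed in generators of sets \emph{containing} $\vr t$, not in a ``cylinder over $\KK U$''. Your fallback, ``induct along chains'', does not say how to do this.

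\textbf{The missing idea} is the paper's dichotomy. Write a generator point as $\vr r+\vr u$ with $\norm{\vr r}=r_d$, $\vr r\perp\spn U$, and $\vr u\in\KK U\subseteq B_U$, where $B_U$ is the $R_e$-ball of $\spn U$ and $e=\dim{\spn U}$.

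(a) Suppose $\dist{\hypp{\vr t},\vr r+B_U}\ge R_e$. Then $\vr r':=\proj{\vr t}(\vr r+\vr u)$ is the projection of $\vr r+\vr u$ onto the line $\spn{\vr r'}$. Moreover $\dist{\spn{\vr r'},\vr r+\vr u}\ge\dist{\hypp{\vr t},\vr r+\vr u}\ge R_e\ge\norm{\vr u}=\dist{\spn{\vr r},\vr r+\vr u}$. By Pythagoras $\norm{\vr r'}\le\norm{\vr r}=r_d$, so the segment lies in the generator of $\{\pm\vr t\}$.

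(b) Otherwise, let $U'=T\cap\spn{U\cup\{\vr t\}}$, and move $\vr u$ to the boundary of $B_U$ while keeping $\dist{\hypp{\vr t},\vr r+\vr u}<R_e$. Now apply the angle lemma (Lemma~\ref{lem:dist}). Inside $\spn{U'}$, the line $\hypp U\cap\spn{U'}$ and the hyperplane $\hypp{\vr t}\cap\spn{U'}$ sit in the same relative position as $\vr t$ and $\spn U$. Hence no $\vr x$ is closer than $\eps\norm{\vr x}$ to both. This lemma does two things:
\begin{itemize}
\item It rules out $e=d-1$, so $U'\in{\cal U}_T$.
\item It gives $\norm{\proj{\spn{U'}}(\vr r)}\le r_{e+1}$.
\end{itemize}
By Lemma~\ref{lem:int} applied inside $\KK{U'}$, this gives $\vr r+\KK U\subseteq\vr r''+\KK{U'}$ with $\vr r''\perp\spn{U'}$ and $\norm{\vr r''}\le r_d$. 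That is, the copy lies inside the $U'$-generator. Since $\vr t\in U'$, the already-handled case finishes the argument.

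Without this case split and the angle lemma, $T$-projectivity of your set, and hence the theorem, is not established.
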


Indeed, 
the $T$-absorbing set $\KK T$ is also $\interval{-1}{1}T$-absorbing,
and a greedy sequence for $\interval{-1}{1}T$ cannot escape from a $T$-absorbing
set $\KKK$, which implies that its norm is bounded by 
$R_d = (1/\eps)^{d-1} \leq (2/\delta)^{d-1}$ since 
$
\delta = \sin(\alpha) = 2\sin(\alpha/2)\cos(\alpha/2) =   
2\eps\sqrt{1-(\eps)^2}  \leq  2\eps.
$
It thus remains to prove \Cref{thm:abs}, which occupies 
the rest of this section.

%
%

%

%
%
%

\subsection{Construction of sets $\KK T$}  \label{sec:construction}

Given $T\subset \R^m$ and $\delta > 0$,
%
we define the set $\KK T$
inscribed, intuitively speaking, in $\slsphere{\vr 0}{R_d}\cap\spn T$,
and circumscribed on $\slsphere{\vr 0}{r_d}\cap\spn T$,
where (see Fig.~\ref{fig:triangle} (right) for an illustration):
\[
r_1 \ := \ 1
\qquad\qquad\qquad
(r_d)^2 \ := \  (R_d)^2 - (R_{d-1})^2 \qquad \text{ for } d\geq 2.
\]
%
%
The construction of $\KK T \subseteq \slsphere{\vr 0}{R_d}\cap\spn T$ is inductive on 
$d = \dim{\spn T}$.

\para{Dimension $d=1$} 
$T = \{\vr t, -\vr t\}$, and $\norm{\vr t}=1$.
We define $\KK {\{-\vr t, \vr t\}} := \interval {-\vr t}{\vr t}$
(we write $\KK {\vr t}$ instead of $\KK {\{\vr t, -\vr t\}}$ below).

\para{Dimension $d=2$ (warmup)} 
Before proceeding with the general induction step, as a warmup we consider 
separately the case of the smallest dimension 2,
to settle geometric intuitions
(see Fig.~\ref{fig:K2} in the introduction).
By the previous paragraph, for every vector $\vr t\in T$ we have a $\{\vr t, -\vr t\}$-absorbing 
line segment
$\KK {\vr t}$ included in the line $\spn{\vr t}$.
The intersection $\hypp {\vr t} \cap \spn T$ is a line containing all vectors in $\spn T$ orthogonal to $\vr t$.
Let $\vr r_{\vr t}\in \hypp {\vr t} \cap \spn T$ be a vector orthogonal to $\vr t$ and 
of norm  
$
\norm{\vr r_{\vr t}} \ = \ r_2 \ = \ \sqrt{(1/\eps)^2 - 1}.
$
Thus $\vr r_{\vr t} \in \hypp {\vr t} \cap \spn T \cap \slsphere{\vr 0}{r_2}$.
There are two such vectors, $\vr r_{\vr t}$ and $-\vr r_{\vr t}$,
and we define two 'shifted copies' of $\KK {\vr t}$:
\begin{align} \label{eq:shifted}
K'_{\vr t} \ := \  \vr r_{\vr t} + \KK {\vr t}
\qquad\qquad
K''_{\vr t} \ := \ - \vr r_{\vr t} + \KK {\vr t},
\end{align}
where the addition stands for the Minkowski sum, namely
$\vr r_{\vr t} + \KK {\vr t}  = \setof{\vr r_{\vr t} + \vr v}{\vr v \in \KK {\vr t}}$ and
$- \vr r_{\vr t} + \KK {\vr t} = \setof{- \vr r_{\vr t} + \vr v}{\vr v \in \KK {\vr t}}$.
The sets \eqref{eq:shifted} are two copies of
the inverval $\KK {\vr t}$, shifted by the vector $\vr r_{\vr t}$ and $-\vr r_{\vr t}$, 
respectively, orthogonally to $\KK {\vr t}$.
We define the set $\KK T$ as the convex hull of the union of
all the line segments \eqref{eq:shifted} for all $\vr t\in T$:
\[
\KK T \ := \ \convhull \Big(\bigcup_{\vr t \in T} K'_{\vr t} \cup K''_{\vr t}\Big).
\]
\vspace{-4mm}

\para{Inductive step: $d\geq 2$} 
A nonempty subset $U\subset T$ satisfying $\dim{\spn U} < d$ is called \emph{maximal} if 
there is no set $U'$
such that $U\subset U' \subseteq T$ and $\spn U = \spn{U'}$.
The set $U$ is thus inclusion-maximal among all subset spanning the same vector subspace.
Let ${\cal U}_T$ be the set of all maximal subsets $U\subset T$.
By the induction assumption, for every $U\in {\cal U}_T$ we have an $U$-absorbing set
$\KK U$ included in $\spn U$.
The idea is to attach a copy of $\KK U$
to every point of the sphere $\slsphere{\vr 0}{r_d}\cap\spn T$ where the radius vector
is orthogonal to $\spn U$, for every $U\in{\cal U}_T$, 
and the to define $\KK T$ as the convex hull of all these 'shifted copies' of sets $\KK U$.

For every $U \in {\cal U}_T$, let $R_{U} := \hypp {U} \cap \spn T \cap \slsphere{\vr 0}{r_d}$
be the set of all vectors of norm $r_d$ orthogonal to $\spn U$.
We define the union of all 'shifted copies' of $\KK {U}$ attached tangentially to all points in $R_U$:
\begin{align} \label{eq:L}
L_{U} \ := \ R_{U} + \KK {U},
\end{align}
where the addition stands for the Minkowski sum, as usual.
In one extremal case, when $\dim{\spn U} = 1$ we have $U = \{-\vr t, \vr t\}$,
$\KK U =  \interval{-\vr t}{\vr t}$, and $L_U$ is a 'cylinder', namely
\begin{align} \label{eq:cylinder}
L_{\vr t} \ := \ L_{\{-\vr t, \vr t\}} = \big(\hypp {\vr t} \cap \spn T\cap \slsphere{\vr 0}{r_{d}}\big) 
+ \interval{-\vr t} {\vr t}.
\end{align}
In the other extremal case, when $\dim{\spn U} = d-1$,  we have
$L_U = \{-\vr r_{U}, \vr r_U\} + \KK {U}$, where $-\vr r_U, \vr r_U$ are the two
vectors orthogonal to $\spn U$ or norm $r_d$.
When dimension $d=\dim{\spn T}=2$,  both the extremal cases coincide.
%
We define the set $\KK T$ as the convex hull of the union of all the sets \eqref{eq:L}:
\[
\KK T \ := \ \convhull \Big(\bigcup_{U \in {\cal U}_T} L_{U}\Big).
\]
\vspace{-2mm}
%
%
\begin{claim} \label{lem:B}
$\KK T \subseteq \slball{\vr 0}{R_d}\cap\spn T$.
\end{claim}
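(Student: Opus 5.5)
We prove the claim by induction on $d=\dim{\spn T}$, simultaneously with the construction. The induction hypothesis is that $\KK U \subseteq \slball{\vr 0}{R_{\dim{\spn U}}}\cap\spn U$ for every $U$ of smaller dimension satisfying the Proviso.

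The Proviso is inherited by every $U\in{\cal U}_T$, so the hypothesis applies to it. Indeed, $U$ consists of unit vectors, and $-U\subseteq U$ by maximality: $-\vr t\in\spn U$ and $-\vr t \in T$, so $-\vr t\in U$. Moreover, the $\delta$-distance property passes to subsets. The base case $d=1$ is immediate, since $\KK{\vr t}=\interval{-\vr t}{\vr t}\subseteq \slball{\vr 0}{1}\cap\spn{\vr t}$ and $R_1=1$.

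For the inductive step, we use that the ball $\slball{\vr 0}{R_d}\cap\spn T$ is convex. It therefore suffices to show $L_U\subseteq \slball{\vr 0}{R_d}\cap\spn T$ for each $U\in{\cal U}_T$; the convex hull of the union then stays inside. Take a point $\vr r+\vr k\in L_U$ with $\vr r\in R_U$ and $\vr k\in\KK U$.

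Containment in $\spn T$ is clear: $\vr r\in\spn T$ by definition of $R_U$, and $\vr k\in\spn U\subseteq\spn T$ by induction. For the norm, $\vr r\in\hypp U$ is orthogonal to $\vr k\in\spn U$, so Pythagoras gives
\[
\norm{\vr r+\vr k}^2=\norm{\vr r}^2+\norm{\vr k}^2\le r_d^2+R_{\dim{\spn U}}^2 .
\]
Since $\dim{\spn U}\le d-1$ and $R_i=(1/\eps)^{i-1}$ is nondecreasing in $i$ (as $\eps\le 1$), we have $R_{\dim{\spn U}}\le R_{d-1}$. Hence the right-hand side is at most $r_d^2+R_{d-1}^2=R_d^2$ by the definition of $r_d$.

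There is no real obstacle here. The only points needing care are checking that elements of ${\cal U}_T$ satisfy the Proviso, so that the induction hypothesis applies, and the monotonicity $R_{\dim{\spn U}}\le R_{d-1}$ for lower-dimensional $U$. The exact orthogonality of $R_U$ to $\spn U$, built into the construction, makes the Pythagorean step exact.
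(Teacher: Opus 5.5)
Your proof is correct and follows the paper's argument. You induct on dimension, bound each $L_U$ by Pythagoras using that points of $R_U$ are orthogonal to $\spn{U}$, use $R_e \le R_{d-1}$ and $r_d^2 + R_{d-1}^2 = R_d^2$, and then pass to the convex hull; your explicit checks that members of ${\cal U}_T$ inherit the Proviso and that the ball is convex fill in details the paper leaves implicit.
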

\begin{claimproof}
By induction on $d$.
If $d=1$, 
$\KK {\vr t} := \interval {-\vr t}{\vr t}$
is equal to the 1-dimensional ball $\slball{\vr 0}{R_1}\cap\spn {\vr t}$.
If  $d=2$ (warmup), 
by construction, each of line segments $K'_{\vr t}, K''_{\vr t}$ is inscribed in the circle
$\slsphere {\vr 0}{R_2}\cap \spn T$,
where $R_2 = {1/\eps}$, and therefore $\KK T\subseteq \slball {\vr 0}{R_2} \cap \spn T$.

Suppose now that $d\geq 2$.
By induction assumption, for every $U\in {\cal U}_T$  we have
$\KK U \subseteq \slball {\vr 0} {R_{e}} \cap \spn U$, where $e = \dim{\spn U}$,
 which implies: 
\[
L_U \ \subseteq \ R_U + \big(\slball {\vr 0} {R_{e}} \cap \spn U\big) \ \subseteq \ 
R_U + \big(\slball {\vr 0} {R_{d-1}}\cap \spn U\big). 
\]
By the triangle of Fig.~\ref{fig:triangle} (right), the corresponding spheres 
are included in $\slsphere {\vr 0} {R_{d}}$:
\[
R_U + \big(\slsphere {\vr 0} {R_{d-1}}\cap \spn U\big) \ \subseteq \ \slsphere {\vr 0} {R_{d}} \cap \spn T,
\]
which implies the analogous inclusion 
$
R_U + \big(\slball {\vr 0} {R_{d-1}}\cap \spn U\big) \ \subseteq \ \slball {\vr 0} {R_{d}} \cap \spn T
$ of balls,
and therefore 
$
\KK T \ \subseteq \ \slball {\vr 0}{R_d} \cap \spn T.
$
\end{claimproof}

\para{Equivalent definition}
The inductive definition of $\KK T$ can be equivalently presented as 
the convex hull of multiple copies of just one-dimensional sets 
$\KK {\vr t} = \interval{-\vr t}{\vr t}$, 
for $\vr t\in T$,   
shifted in multiple orthogonal directions along vectors
of norm $r_i$, for $i \in [d]$.
Going even further, one can take instead of each $\KK{\vr t}$ only its endpoints 
$\{-\vr t, \vr t\}$.
This intuition leads to the following non-inductive but equivalent definition of $\KK T$.

Define a \emph{$T$-chain} as any sequence 
$\{\vr 0\} = V_0 \subset V_1 \subset \ldots \subset V_\ell = \spn T$
of subspaces of $\spn T$ where each $V_i$ is spanned by some subset of $T$,
and $\dim{V_1} = 1$.
There are only finitely many $T$-chains, since $T$ is finite.
Given a $T$-chain $F = (V_0 \subseteq V_1 \subseteq \ldots \subseteq V_\ell)$
and a vector $\vr x \in \spn T$, we define the \emph{$F$-projection} of $\vr x$ 
as the sequence 
$(\vr x^F_1, \vr x^F_2, \ldots, \vr x^F_\ell)$ where 
\[
\proj{V_{i-1}}(\vr x) + \vr x^F_i \ = \ \proj{V_i}(\vr x)
\]
for every $i \in [\ell]$.
Thus $\vr x^F_i$ is orthogonal to $V_{i-1}$ and belongs to $V_i$.
We define the set 
\[
L_F \ := \ \setof{\vr x \in \spn T}
{\norm{\vr x^F_i} = r_{\dim {V_i}} \text{ for all } i\in [\ell]}.  
\]
In particular, as $\dim{V_1}=1$ and $\dim{V_\ell} = \dim{\spn T} = d$, 
every vector $\vr x \in L_F$ satisfies $\norm{\vr x^F_1} = r_1 = 1$ and
$\norm{\vr x^F_\ell} = r_d$.
The sets $L_F$ allow us to give the following equivalent definition of $\KK T$:

\begin{claim} \label{claim:equiv}
$\KK T = \convhull \Big(\bigcup_{F} L_F\Big)$, where the union ranges 
over all $T$-chains $F$.
\end{claim}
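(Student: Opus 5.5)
We argue by induction on $d=\dim{\spn T}$, proving both inclusions between $\KK T$ and $\convhull(\bigcup_F L_F)$. For $d=1$ the only $T$-chain is $\{\vr 0\}\subset\spn{\vr t}$, and $L_F=\{-\vr t,\vr t\}$ (vectors with $\norm{\vr x^F_1}=r_1=1$). Its convex hull is $\interval{-\vr t}{\vr t}=\KK{\vr t}$, so the base case holds.

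For the inductive step, the key observation is a bijection between $T$-chains of $T$ and pairs $(U,F')$. Here $U\in{\cal U}_T$ and $F'$ is a $U$-chain. Given a $T$-chain $F=(V_0\subset\dots\subset V_\ell=\spn T)$, set $U:=T\cap V_{\ell-1}$. This set is maximal, since $V_{\ell-1}$ is spanned by a subset of $T$ and has dimension $<d$, so $\spn U=V_{\ell-1}$. Then $F'=(V_0\subset\dots\subset V_{\ell-1})$ is a $U$-chain. Indeed, each $V_i$ with $i\le \ell-1$ is spanned by a subset of $T$ lying in $V_{\ell-1}$, hence by a subset of $U$. Conversely, appending $\spn T$ to a $U$-chain gives a $T$-chain. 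Under this correspondence the $F$-projection of $\vr x\in\spn T$ is the $F'$-projection of $\proj{\spn U}(\vr x)$ followed by $\vr x^F_\ell=\vr x-\proj{\spn U}(\vr x)$. Therefore
\[
L_F \;=\; \setof{\vr y+\vr r}{\vr y\in L_{F'},\ \vr r\in \hypp U\cap\spn T,\ \norm{\vr r}=r_d} \;=\; R_U + L_{F'}.
\]

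Now we combine. By the induction hypothesis applied to $U$ (which satisfies the proviso, since $-U\subseteq U$ by maximality, and $U$ inherits the $\delta$-distance property with $\dim{\spn U}<d$), we have $\KK U=\convhull(\bigcup_{F'}L_{F'})$. Hence
\[
L_U = R_U+\KK U = R_U+\convhull\Big(\bigcup_{F'}L_{F'}\Big).
\]
For the inclusion $\supseteq$, each $L_F=R_U+L_{F'}\subseteq R_U+\KK U=L_U\subseteq\KK T$.

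For the inclusion $\subseteq$, take $\vr r+\vr y$ with $\vr r\in R_U$ and $\vr y=\sum_j\lambda_j\vr y_j$, where $\vr y_j\in L_{F'_j}$. Then $\vr r+\vr y=\sum_j\lambda_j(\vr r+\vr y_j)$, and each $\vr r+\vr y_j\in R_U+L_{F'_j}=L_{F_j}$. So $L_U\subseteq\convhull(\bigcup_F L_F)$, and taking the convex hull over all $U$ gives the claim.

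The main obstacle is the bookkeeping in the bijection. We must check that $U=T\cap V_{\ell-1}$ is indeed maximal with $\spn U=V_{\ell-1}$, that the radii indices match (the last radius $r_{\dim{V_\ell}}=r_d$ is the one used in $R_U$), and that the hypotheses of the proviso pass to $U$. The $d=2$ warmup definition should also be checked to agree with the general step; it does, since there ${\cal U}_T$ consists of the pairs $\{\pm\vr t\}$ and $R_U=\{\pm\vr r_{\vr t}\}$. The convexity manipulation itself is routine, because Minkowski addition of a single point commutes with convex combinations.
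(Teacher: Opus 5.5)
Your proof is correct and follows essentially the same route as the paper. Both argue by induction on $d$, use the correspondence between $T$-chains $F$ and pairs $(U,F')$ with $U\in{\cal U}_T$ and $F'$ a $U$-chain, derive the identity $L_F=R_U+L_{F'}$, and finish with a convex-hull interchange. Your two-inclusion treatment of that last step is slightly more careful than the paper's. The paper writes $R_U+\convhull(\bigcup_{F'}L_{F'})=\convhull(\bigcup_{F'}(R_U+L_{F'}))$ as an equality, but only ``$\subseteq$'' holds in general, since the left side need not be convex ($R_U$ is a sphere). This does not affect the result once the convex hull over all $U\in{\cal U}_T$ is taken.
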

\begin{claimproof}
By induction on $d=\dim{\spn T}$.
If $d=1$, there is only one $T$-chain 
$F = (\{\vr 0\} \subset \spn {\vr t})$,
$L_F = \{-\vr t, \vr t\}$,
and $\KK T = \interval{-\vr t}{\vr t} = \convhull(L_F)$.

Suppose now that $d\geq 2$.
By induction assumption, for every $U\in {\cal U}_T$ we have
$\KK U = \convhull \Big(\bigcup_{F'} L_{F'}\Big)$, where the union ranges 
over all $U$-chains $F'$.
Thus, by \eqref{eq:L},
\[L_U \ = \ R_U + \convhull \Big(\bigcup_{F'} L_{F'}\Big) \ = \ 
\convhull \Big(\bigcup_{F'} \big(R_U + L_{F'}\big)\Big).\]
Any $U$-chain $F'$ can be extended to a $T$-chain $F$ by adding $\spn T$ at the end of $F'$,
and $L_{F'}$ and $L_F$ are related by $L_F = R_U + L_{F'}$.
Conversely, every $T$-chain $F$ extends an $U$-chain $F'$ in this way, 
for some $U\in{\cal U}_T$, and therefore
$\KK T = \convhull \Big(\bigcup_{U\in {\cal U}_T} L_U\Big) = \convhull \Big(\bigcup_{F} L_F\Big)$, 
where the union ranges over all $T$-chains $F$.
\end{claimproof}

\para{Basic geometric properties}


For every choice of $U\in {\cal U}_T$ and $\vr r \in R_U$, the set $\KK T$ includes not only 
the two shifted copies
$\vr r + \KK U$ and $-\vr r + \KK U$ but, being the convex hull, includes also all the `shifted copies'
in-between these two ones, namely all 
$\vr r' + \KK U$ where $\vr r' \in \interval {-\vr r}{\vr r}$:
\begin{lemma} \label{lem:int}
For every $U\in{\cal U}_T$ and $\vr r \in R_U$, we have
$\interval {-\vr r}{\vr r} + \KK U \subseteq \KK T$.
\end{lemma}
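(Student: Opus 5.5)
The plan is to use that $\vr r$ and $-\vr r$ both lie in $R_U$, so both shifted copies $\vr r + \KK U$ and $-\vr r + \KK U$ are contained in $L_U \subseteq \KK T$. Indeed, $R_U = \hypp U \cap \spn T \cap \slsphere{\vr 0}{r_d}$ is symmetric under negation, being the intersection of a linear subspace with a centred sphere. Hence $-\vr r \in R_U$ whenever $\vr r \in R_U$, and by definition \eqref{eq:L} we get $\pm\vr r + \KK U \subseteq R_U + \KK U = L_U$. Since $L_U$ is one of the sets in the union whose convex hull defines $\KK T$, both copies lie in $\KK T$.

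Next, take an arbitrary point of $\interval{-\vr r}{\vr r} + \KK U$. It has the form $\vr r' + \vr v$ with $\vr v \in \KK U$ and $\vr r' = \lambda \vr r + (1-\lambda)(-\vr r)$ for some $\lambda \in [0,1]$. I will write
\[
\vr r' + \vr v \ = \ \lambda(\vr r + \vr v) + (1-\lambda)(-\vr r + \vr v),
\]
which uses only that the coefficients sum to one. This exhibits the point as a convex combination of $\vr r + \vr v \in \vr r + \KK U$ and $-\vr r + \vr v \in -\vr r + \KK U$. Both of these lie in $\KK T$ by the first step.

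Finally, $\KK T$ is convex, being defined as a convex hull, so the combination lies in $\KK T$, which proves the inclusion. There is no real obstacle here. The only point to check is the symmetry of $R_U$, which requires noting that $\hypp U \cap \spn T$ is a linear subspace. For the warmup case $d=2$, with $K'_{\vr t}$ and $K''_{\vr t}$, the same argument applies verbatim with $\vr r = \vr r_{\vr t}$.
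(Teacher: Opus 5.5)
Your proof is correct and is the same argument the paper gives in one line. Because $R_U$ is symmetric, both copies $\pm\vr r + \KK U$ lie in $L_U \subseteq \KK T$, and convexity of $\KK T$ then gives every intermediate copy. You have simply written out the symmetry of $R_U$ and the explicit convex combination that the paper leaves implicit.
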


For every $U\in{\cal U}_T$ and $\vr r \in R_U$, the projection of $\vr r + \KK U$ on
$\spn U$ is exactly $\KK U$, since $\vr r$ is orthogonal to $\spn U$. 
Our next claim is a slight generalization of this observation,
relying on the fact that Minkowski sum commutes with projection:

\begin{lemma}\label{lem:proj}
For every $U, U'\in{\cal U}_T$ with $U\subseteq U'$ and $\vr r \in R_U$, 
$
\proj{\spn{U'}}(\vr r + \KK U) \ = \ \proj{\spn{U'}}(\vr r) + \KK U.
$
\end{lemma}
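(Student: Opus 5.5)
Since orthogonal projection onto a subspace is linear, it commutes with Minkowski sums, so
\[
\proj{\spn{U'}}(\vr r + \KK U) \ = \ \proj{\spn{U'}}(\vr r) + \proj{\spn{U'}}(\KK U).
\]
The lemma then follows once we show that $\proj{\spn{U'}}$ acts as the identity on $\KK U$, i.e.\ $\proj{\spn{U'}}(\KK U) = \KK U$.

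For this, we use that $\KK U \subseteq \spn U$. This holds by construction: either $\KK U$ comes from the inductive hypothesis applied to $U$ (the statement of \Cref{thm:abs}), or directly from \Cref{lem:B} applied to $U$, which gives $\KK U \subseteq \slball{\vr 0}{R_e}\cap\spn U$. Since $U \subseteq U'$, we have $\spn U \subseteq \spn{U'}$. Hence every $\vr v \in \KK U$ lies in $\spn{U'}$ and satisfies $\proj{\spn{U'}}(\vr v) = \vr v$. Therefore $\proj{\spn{U'}}(\KK U) = \KK U$, and combining with the display above gives the claimed equality.

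No step here is a real obstacle. The only point to check is that $\KK U$ indeed lies in $\spn U$, which is part of the construction. The hypothesis $\vr r \in R_U$ is not needed for the equality itself; it only matters in later uses, where $\proj{\spn{U'}}(\vr r)$ is orthogonal to $\spn U$ and has norm at most $r_d$. When $U=U'$ this specializes to the preceding observation that $\proj{\spn U}(\vr r + \KK U) = \KK U$.
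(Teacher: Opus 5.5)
Your proposal is correct and follows the paper's own justification: linearity of $\proj{\spn{U'}}$ lets it commute with the Minkowski sum, and $\KK U \subseteq \spn U \subseteq \spn{U'}$ means the projection fixes $\KK U$ pointwise. You are also right that $\vr r \in R_U$ plays no role in the equality itself.
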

%



\Cref{lem:dist1} reinterprets the value of $\eps$ in terms of distances 
from a point to $\spn U$ where $U\subset T$, and to $\spn {\vr t}$, where
$\vr t \in T \setminus \spn U$.
It follows by Def.~\ref{def:delta} and Fig.~\ref{fig:triangle} (left).
Its corollary, \Cref{lem:dist}, obtained by
taking orthogonal complements of $\spn U$ and $\spn{\vr t}$,
is crucial for correctness of our construction.
\begin{lemma} \label{lem:dist1}
Let $U\subset T$, $\vr t\in T\setminus \spn U$, and $\vr x \in \spn T$.
If one of $\dist{\spn U, \vr x}$, 
$\dist{\spn {\vr t}, \vr x}$ is smaller than $\eps\norm{\vr x}$ then
the other one is larger than $\eps\norm{\vr x}$.
\end{lemma}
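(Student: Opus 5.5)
We argue by contradiction, phrasing everything in terms of angles. Suppose $\vr x \neq \vr 0$ (for $\vr x = \vr 0$ both distances vanish and the hypothesis cannot hold). Since $\dist{V,\vr x} = \norm{\vr x}\sin\theta_V$, where $\theta_V \in [0,\pi/2]$ is the angle between $\vr x$ and the subspace $V$, the claim is equivalent to the following. If $\theta_{\spn U} < \alpha/2$ (using $\eps = \sin(\alpha/2)$ and monotonicity of $\sin$ on $[0,\pi/2]$), then $\theta_{\spn{\vr t}} > \alpha/2$, and symmetrically with the roles of $U$ and $\vr t$ swapped. In fact we will show the stronger statement that $\theta_{\spn U} + \theta_{\spn{\vr t}} \geq \alpha$. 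This gives both directions at once: if one angle is $< \alpha/2$, the other is $> \alpha/2$.

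The key step is a triangle inequality for angles. Choose a unit vector $\vr u \in \spn U$ realizing $\theta_{\spn U}$, namely the normalized projection of $\vr x$ onto $\spn U$. If that projection is $\vr 0$, then $\theta_{\spn U} = \pi/2 \geq \alpha/2$, so the hypothesis in this direction fails; the other direction then needs $\theta_{\spn{\vr t}} < \alpha/2$, and that case is handled by the same inequality below. Also choose $\vr s = \pm\vr t$ such that the angle between $\vr x$ and $\vr s$ equals $\theta_{\spn{\vr t}} \leq \pi/2$. The angle between unit vectors is a metric on the sphere, so
\[
\angle(\vr u,\vr s) \leq \theta_{\spn U} + \theta_{\spn{\vr t}}.
\]
Moreover, the angle between $\vr s$ and $\spn U$ is at most $\angle(\vr u,\vr s)$, because $\vr u \in \spn U$. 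Since $\vr t \in T\setminus\spn U$, the $\delta$-distance property (Definition~\ref{def:delta}) gives $\dist{\spn U,\vr s} = \dist{\spn U, \vr t} \geq \delta = \sin\alpha$. The angle between $\vr s$ and $\spn U$ is therefore at least $\alpha$, and so $\theta_{\spn U}+\theta_{\spn{\vr t}} \geq \alpha$.

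To finish, assume for instance that $\dist{\spn U,\vr x} < \eps\norm{\vr x}$, that is, $\theta_{\spn U} < \alpha/2$. Then $\theta_{\spn{\vr t}} > \alpha/2$, which gives $\dist{\spn{\vr t},\vr x} = \norm{\vr x}\sin\theta_{\spn{\vr t}} > \eps\norm{\vr x}$. The last step uses $\theta_{\spn{\vr t}} \leq \pi/2$, so that $\sin$ is increasing on the relevant range. The symmetric case works the same way. The degenerate case where the projection onto $\spn U$ vanishes is covered: there $\theta_{\spn U} = \pi/2 \geq \alpha/2$, and the conclusion for the other direction is trivial since $\eps \leq \sin(\pi/4) < 1$.

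The main point requiring care is the angle triangle inequality together with the choice of the sign $\pm\vr t$ (so that all angles lie in $[0,\pi/2]$). This is standard spherical geometry. The assumption $\vr x\in\spn T$ is not actually needed.
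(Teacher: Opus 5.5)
Your proof is correct and follows essentially the paper's route: the paper justifies this lemma only by pointing to Definition~\ref{def:delta} and the left panel of Fig.~\ref{fig:triangle}. The underlying idea there is that $\vr t$ makes angle at least $\alpha$ with $\spn U$, so no direction can lie strictly within angle $\alpha/2$ of both; your spherical triangle inequality $\theta_{\spn U}+\theta_{\spn{\vr t}}\geq\alpha$, with the sign choice $\vr s=\pm\vr t$ and the separate treatment of the degenerate case, is a rigorous, dimension-free version of that picture.
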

%
%
\begin{lemma} \label{lem:dist}
Let $U\subseteq T$,
$\vr t\in T\setminus \spn U$, and
$\vr x \in \spn {T}$, 
assuming $\dim{\spn U} +1 = \dim{\spn{T}}$. 
If one of $\dist{\hypp{U} \cap \spn T, \vr x}$, 
$\dist{\hypp {\vr t} \cap \spn{T}, \vr x}$ is smaller than $\eps\norm{\vr x}$ then
the other one is larger than $\eps\norm{\vr x}$.
\end{lemma}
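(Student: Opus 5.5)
The plan is to rewrite both distances as trigonometric quantities attached to the direction of $\vr x$, and then use the triangle inequality for angles between lines. This is essentially Lemma~\ref{lem:dist1} with $\spn U$ and $\spn{\vr t}$ replaced by their orthogonal complements inside $\spn T$.

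\textbf{Setup.} If $\vr x=\vr 0$, neither distance is smaller than $\eps\norm{\vr x}=0$, so there is nothing to prove; assume $\vr x\neq\vr 0$. Since $\dim{\spn U}=d-1$, the space $\hypp U\cap\spn T$ is a line spanned by a unit vector $\vr n$. Then
\[
\dist{\hypp U\cap\spn T,\vr x}=\norm{\proj{\spn U}(\vr x)}=\norm{\vr x}\sin\beta ,
\]
where $\beta\in[0,\pi/2]$ is the angle between $\vr x$ and the line $\spn{\vr n}$. This holds because $\spn T=\spn U\oplus\spn{\vr n}$ orthogonally. Similarly, $\dist{\hypp{\vr t}\cap\spn T,\vr x}=|\innerprod{\vr x}{\vr t}|=\norm{\vr x}\cos\gamma$, where $\gamma\in[0,\pi/2]$ is the angle between $\vr x$ and the line $\spn{\vr t}$.

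\textbf{Angle between $\vr t$ and $\vr n$.} Since $\vr t\in T\setminus\spn U$ is a unit vector, the $\delta$-distance property gives $\dist{\spn U,\vr t}\geq\delta=\sin\alpha$. Hence the angle between $\vr t$ and $\spn U$ is at least $\alpha$. As $\vr n$ is the normal of the hyperplane $\spn U$ in $\spn T$, the angle $\theta$ between the lines $\spn{\vr t}$ and $\spn{\vr n}$ satisfies $\theta\leq\pi/2-\alpha$.

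\textbf{Triangle inequality for lines.} The angle between lines through the origin, taking values in $[0,\pi/2]$, is a metric (the standard metric on projective space). So $|\gamma-\beta|\leq\theta\leq\pi/2-\alpha$. Recall $\eps=\sin(\alpha/2)$.

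First suppose $\norm{\vr x}\sin\beta<\eps\norm{\vr x}$. Then $\beta<\alpha/2$, so $\gamma\leq\beta+\pi/2-\alpha<\pi/2-\alpha/2$. Therefore $\cos\gamma>\sin(\alpha/2)=\eps$, i.e.\ the second distance exceeds $\eps\norm{\vr x}$.

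Conversely, suppose $\cos\gamma<\eps$. Then $\gamma>\pi/2-\alpha/2$, so $\beta\geq\gamma-(\pi/2-\alpha)>\alpha/2$. Therefore $\sin\beta>\eps$.

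\textbf{Main obstacle.} The only delicate point is justifying the metric property of angles between lines, in particular handling signs by choosing representatives $\pm\vr n$ and $\pm\vr t$ that make acute angles. I would cite it as the spherical triangle inequality applied to suitably signed unit vectors. Alternatively, one could apply Lemma~\ref{lem:dist1} directly after noting the complementary identities $\sin\beta=\cos(\text{angle to }\spn U)$ and $\cos\gamma=\sin(\text{angle to }\spn{\vr t})$; the direct angle computation above seems cleaner.
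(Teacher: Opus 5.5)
Your proof is correct, but it follows a different route from the paper.

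The paper does no new trigonometry here. It notes that $\spn U$ and $\hypp U\cap\spn T$ are orthogonal complements in $\spn T$, and likewise $\spn{\vr t}$ and $\hypp{\vr t}\cap\spn T$. Hence the angle between the line $\spn{\vr t}$ and the hyperplane $\spn U$ equals the angle between the line $\hypp U\cap\spn T$ and the hyperplane $\hypp{\vr t}\cap\spn T$. An isometry of $\spn T$ therefore maps the first (line, hyperplane) pair onto the second, and Lemma~\ref{lem:dist1} transfers verbatim.

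You instead prove the dual statement directly. You express the two distances as $\norm{\vr x}\sin\beta$ and $\norm{\vr x}\cos\gamma$. You convert the $\delta$-distance property into $\theta\le\pi/2-\alpha$ via $\cos\theta=|\innerprod{\vr t}{\vr n}|=\dist{\spn U,\vr t}\ge\sin\alpha$. You then finish with the triangle inequality for the angle metric on lines, tracking strictness correctly.

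The trade-off is as follows. The paper's reduction is shorter and reuses Lemma~\ref{lem:dist1}. However, it rests on the informally stated fact that (line, hyperplane) pairs with equal angle are congruent, and on Lemma~\ref{lem:dist1}, which the paper justifies only by pointing to the definition and a figure. Your argument is self-contained and does not need Lemma~\ref{lem:dist1} at all. It also shows exactly where $\dim{\spn U}=d-1$ enters: it makes $\hypp U\cap\spn T$ the line spanned by the unit normal $\vr n$.

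One correction to your closing aside. $\sin\beta$ is indeed the cosine of the angle between $\vr x$ and $\spn U$. But $\cos\gamma$ is the cosine, not the sine, of the angle between $\vr x$ and $\spn{\vr t}$. So Lemma~\ref{lem:dist1} cannot be applied to the same point $\vr x$; that alternative only works after transporting $\vr x$ by the paper's isometry. Your main argument does not use the aside, so this does not affect correctness.
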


\begin{proof}
Since $\dim{\spn U} +1 = \dim{\spn{T}}$, the subspace $\hypp{U}\cap \spn T$ is
a one-dimensional line.
According to \Cref{lem:dist1},
for every $\vr x \in \spn T$,
if one of $\dist{\spn U, \vr x}$, $\dist{\spn {\vr t}, \vr x}$ is 
smaller than $\eps\cdot\norm{\vr x}$ then
the other one is larger than $\eps\cdot\norm{\vr x}$.
We deduce \Cref{lem:dist}  using two orthogonality relations:
the hyperplane $\spn U$ is orthogonal to the line $\hypp{U}\cap \spn T$, 
and the hyperplane $\hypp {\vr t}\cap\spn T$ is orthogonal to the line 
$\spn{\vr t}$.
Therefore the angle between the line $\spn{\vr t}$ and 
the hyperplane $\spn U$ (the first pair) is exactly the same as the angle
between the line $\hypp U \cap \spn T$ and the 
hyperplane $\hypp {\vr t}\cap\spn T$ (the second pair).
Therefore there is an isometry of $\spn T$ 
that maps the first pair to the second one.
In consequence, the property of distances of $\vr x \in \spn T$
to the first pair, 
given by Lemma \ref{lem:dist1}, carries over to the second pair:
for every $\vr x \in \spn T$,
if one of $\dist{\hypp{U} \cap \spn T, \vr x}$, 
$\dist{\hypp {\vr t} \cap \spn{T}, \vr x}$ is smaller than $\eps\norm{\vr x}$ then
the other one is larger than $\eps\norm{\vr x}$.
This completes the proof.
\end{proof}

\Cref{lem:dist1,lem:dist} are stated for strict inequalities, 
but reading them in contrapositive yields the versions with
non-strict inequalities holding true as well.

\subsection{The sets $\KK T$ are $T$-projective} \label{sec:correctness}

We now show the correctness of our construction, that is, we prove 
that $\KK T$ is $T$-projective, by induction on $d =\dim{ \spn T}$.



\para{Induction base: $d=1$} 
The line segment $\KK {\vr t} = \interval {-\vr t}{\vr t}$ is readily seen to be
$\{-\vr t, \vr t\}$-projective.

\para{Warm-up: $d=2$} 

We argue that $\KK T$ is $T$-projective.
Consider an arbitrary $\vr t \in T$.
By Lemma \ref{lem:int} we get $ \interval {-\vr r_{\vr t}}{\vr r_{\vr t}} + \interval{-\vr t}{\vr t} \subseteq \KK T$, 
and therefore it is enough to prove 
\[
\proj{\vr t}(\KK T) \ \subseteq \ \interval {-\vr r_{\vr t}}{\vr r_{\vr t}}.
\]
As $\KK T$ is a convex hull of line segments $K'_{\vr u}$ and $K''_{\vr u}$, for $\vr u \in T$, it is enough
to prove that the projection $\proj{\vr t}$ maps all the line segments into 
$\interval {-\vr r_{\vr t}}{\vr r_{\vr t}}$.
This means that for every $\vr u \in T$ and $\vr x \in \KK {\vr u} = \interval{-\vr u}{\vr u}$ we have to prove 
\begin{claim} \label{claim:triangle}
$\proj{\vr t}(\vr r_{\vr u} + \vr x) \in \interval {-\vr r_{\vr t}}{\vr r_{\vr t}}$ \ and \ 
$\proj{\vr t}(-\vr r_{\vr u} + \vr x) \in \interval {-\vr r_{\vr t}}{\vr r_{\vr t}}$.
\end{claim}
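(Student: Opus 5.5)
Since $\proj{\vr t}$ is linear, maps into the line $\hypp{\vr t}\cap\spn T=\spn{\vr r_{\vr t}}$, and the target segment is symmetric, it suffices to show $\norm{\proj{\vr t}(\pm\vr r_{\vr u}+\vr x)}\le r_2$ for $\vr x\in\interval{-\vr u}{\vr u}$. Because the segment $\pm\vr r_{\vr u}+\interval{-\vr u}{\vr u}$ is convex and $\proj{\vr t}$ is linear, it is enough to check the four endpoints $\pm\vr r_{\vr u}\pm\vr u$. The first step is to dispose of the degenerate case $\vr u=\pm\vr t$. Then $\vr r_{\vr u}=\pm\vr r_{\vr t}$ and $\proj{\vr t}(\vr u)=\vr 0$, so the projection is exactly $\pm\vr r_{\vr t}$, and we are done.

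For the main case $\vr u\notin\spn{\vr t}$, note that the angle $\beta$ between the lines $\spn{\vr u}$ and $\spn{\vr t}$ satisfies $\beta\in[\alpha,\pi-\alpha]$. This follows from the $\delta$-distance property with $U=\{\vr t\}$, giving $\sin\beta\ge\delta=\sin\alpha$. Each endpoint $\vr y=\pm\vr r_{\vr u}\pm\vr u$ has norm $\sqrt{1+r_2^2}=1/\eps=R_2$, and it makes angle $\gamma$ with $\spn{\vr u}$, where $\tan\gamma=r_2$, i.e.\ $\sin\gamma=r_2\eps=\cos(\alpha/2)$ and $\cos\gamma=\eps=\sin(\alpha/2)$. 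So $\gamma=\pi/2-\alpha/2$, and the four endpoints sit on the circle of radius $1/\eps$ at angles $\pm(\pi/2-\alpha/2)$ from $\pm\vr u$. Equivalently, they are exactly at angular distance $\alpha/2$ from the line $\hypp{\vr u}\cap\spn T$.

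We need $\norm{\proj{\vr t}(\vr y)}=\norm{\vr y}\,|\sin\theta|\le r_2$, where $\theta$ is the angle between $\vr y$ and $\spn{\vr t}$. This is the same as $|\sin\theta|\le r_2\eps=\cos(\alpha/2)$, i.e.\ the angle between $\vr y$ and $\spn{\vr t}$ is at most $\pi/2-\alpha/2$ (mod $\pi$). Equivalently, $\vr y$ lies at angular distance at least $\alpha/2$ from the line $\hypp{\vr t}\cap\spn T$. So the claim reduces to a statement about lines in the plane $\spn T$: the lines $\hypp{\vr u}$ and $\hypp{\vr t}$ (within $\spn T$) make angle $\beta\ge\alpha$, being rotations of $\spn{\vr u},\spn{\vr t}$ by $\pi/2$. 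The direction of $\vr y$ is within $\alpha/2$ of $\hypp{\vr u}$, hence by the triangle inequality for angles between lines it is at least $\beta-\alpha/2\ge\alpha/2$ from $\hypp{\vr t}$. This is precisely the content of Lemma~\ref{lem:dist} with $U=\{\vr u\}$ and $\vr x=\vr y$. There $\dist{\hypp{\vr u}\cap\spn T,\vr y}=\eps\norm{\vr y}$ (not larger than $\eps\norm{\vr y}$), so by the contrapositive/nonstrict reading $\dist{\hypp{\vr t}\cap\spn T,\vr y}\ge\eps\norm{\vr y}$. This gives $\norm{\proj{\vr t}(\vr y)}\le\norm{\vr y}\cos(\alpha/2)=r_2$.

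The main obstacle is getting the angular bookkeeping right. This means identifying the distance from $\vr y$ to $\hypp{\vr t}\cap\spn T$ as the norm of the component of $\vr y$ along $\vr t$. It also means checking that the threshold $\eps\norm{\vr y}$ translates exactly into $\norm{\proj{\vr t}(\vr y)}\le r_2$, via $\norm{\vr y}^2\eps^2=1$ and $\norm{\vr y}^2(1-\eps^2)=r_2^2$. Invoking Lemma~\ref{lem:dist} (or the planar angle argument directly) should then make the rest routine.
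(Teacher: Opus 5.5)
Your proof is correct and takes the paper's route. The paper likewise applies the non-strict reading of Lemma~\ref{lem:dist} with $U=\{\vr u\}$ to the point $\vr r_{\vr u}+\vr x$, whose distance to $\spn{\vr r_{\vr u}}$ is exactly $\eps\norm{\vr r_{\vr u}+\vr x}$, and then turns $\dist{\spn{\vr r_{\vr t}},\vr r_{\vr u}+\vr x}\ge 1$ into $\norm{\proj{\vr t}(\vr r_{\vr u}+\vr x)}\le r_2$ by Pythagoras. Your convexity reduction to the endpoints $\vr x=\pm\vr u$ and your separate treatment of $\vr u=\pm\vr t$, where Lemma~\ref{lem:dist} does not apply, make explicit two points the paper's short proof leaves implicit.
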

\begin{claimproof}
We focus on the former claim, relying on the apparent symmetry.
Knowing that 
$\dist{\spn{\vr r_{\vr u}}, \vr r_{\vr u} + \vr x} = \norm{\vr x} = 1 = 
\eps \norm{\vr r_{\vr u} + \vr x}$, 
we apply the non-strict version of Lemma \ref{lem:dist} 
(applied to $d=2$, $U=\{\vr u\}$, $\hypp U \cap \spn T = \spn{\vr r_{\vr u}}$,
$\hypp{\vr t}\cap \spn T = \spn{\vr r_{\vr t}}$)
to get $\dist{\spn{\vr r_{\vr t}}, \vr r_{\vr u} + \vr x} \geq 1$,
which implies, by Fig.~\ref{fig:triangle} (right) in case $d=2$, that
$\proj{\vr t}(\vr r_{\vr u} + \vr x) \in \interval {-\vr r_{\vr t}}{\vr r_{\vr t}}$.
\end{claimproof}

\para{Inductive step: $d\geq 2$} 

Relying on the inductive assumption that the sets $\KK U$ are $U$-projective for all $U\in {\cal U}_T$,
we argue that $\KK T$ is $T$-projective.
We pick up an arbitrary $\vr t \in T$, and aim at proving
the inclusion $\proj{\vr t}(\KK T) + \interval {-\vr t}{\vr t} \subseteq \KK T$.
By convexity of $\KK T$ it suffices to prove, for every $U\in {\cal U}_T$ and
every vector $\vr r\in R_U$, the inclusion:
\begin{align} \label{eq:topro}
\proj{\vr t}(\vr r + \KK U) + \interval {-\vr t}{\vr t} \ \subseteq \ \KK T.
\end{align}
We distinguish two cases, depending on whether $\vr t$ belongs to $U$ or not.
\para{Case 1:  $\vr t \in U$}
Since $\vr t$ is orthogonal to $\vr r$, we have
\[
\proj{\vr t}(\vr r + \KK U) + \interval{-\vr t}{\vr t}  \ = \ 
\vr r + \proj{\vr t}(\KK U) + \interval{-\vr t}{\vr t},
\]
and since the set $\KK U$ is $U$-projective by inductive assumption, the right-hand side is included in
$
\vr r + \KK U \ \subseteq \ \KK T,
$
which implies \eqref{eq:topro}.

\para{Case 2:  $\vr t \notin U$}
%
%
%
Let $B_U := \slball{\vr 0}{R_e}\cap \spn U$ be the ball of radius $R_e$ inside $\spn U$,
where $e=\dim{\spn U}$.
Due to Lemma \ref{lem:B} we have $\KK U\subseteq  B_U$, and therefore 
$\vr r + \KK U\subseteq  \vr r + B_U$.
We distinguish two subcases, depending on whether the distance
$\dist{\hypp{\vr t}, \vr r + B_U}$ is smaller or larger than $R_e$.
Note that
$\dist{\hypp {\vr t}, \vr x} \ = \ 
\dist{\hypp {\vr t} \cap \spn T, \vr x}$ for every $\vr x \in \spn T$.

\para{Case 2a:  $\dist{\hypp {\vr t}, \vr r + B_U} \geq R_{e}$}

%
We prove the following inclusion:
\begin{align} \label{eq:f}
\proj{\vr t}(\vr r + \KK U) \ \subseteq \  Y \ := \ 
\hypp {\vr t} \cap \spn T \cap \slball {\vr 0}{r_d}.
\end{align}
Pick up an arbitrary point $\vr u \in \KK U\subseteq B_U$.
Let $\vr r' \ := \ \proj{\vr t}(\vr r + \vr u)$.
If $\vr r'=\vr 0$, we immediately deduce $\proj{\vr t}(\vr r + \vr u) \in Y$
since $\vr 0\in Y$.
Otherwise,
consider two lines: $\spn{\vr r}$ and $\spn{\vr r'}$.
Relying on $\vr u \in B_U$, $R_e \leq \dist{\hypp {\vr t}, \vr r + B_U}$
and $\vr r' \in \hypp{\vr t}$, respectively, 
we get the chain of three inequalities: 
\[
\dist{\spn{\vr r}, \vr r + \vr u} \ = \ \norm{\vr u} \ \leq \ R_e \ \leq \  
\dist{\hypp {\vr t}, \vr r + B_U} \ \leq \ \dist{\spn{\vr r'}, \vr r + \vr u}.
\]
Therefore the projection of 
$\vr r + \vr u$ on $\spn{\vr r'}$ has norm smaller or equal to the norm
of the projection of $\vr r + \vr u$ on $\spn{\vr r}$.
The former projection is $\vr r'$ while the latter one is $\vr r$, 
of norm $\norm{\vr r} = r_d$,
which implies $\norm{\vr r'} \leq r_d$ and hence 
$\proj{\vr t}(\vr r + \vr u) \in Y$.
This proves \eqref{eq:f}.

Using \eqref{eq:f}, we deduce \eqref{eq:topro}:
\[
\proj{\vr t}(\vr r + \KK U) + \interval{-\vr t}{\vr t} \ \subseteq \  
\big(\hypp {\vr t} \cap \spn T \cap \slball {\vr 0}{r_d}\big) + \interval{-\vr t}{\vr t}  
\ = \ \convhull(L_{\vr t}) \ \subseteq \ \KK T.
\]
\vspace{-4mm}

\para{Case 2b:  $\dist{\hypp {\vr t}, \vr r + B_U} < R_{e}$}

Let $U'$ be the inclusion-maximal subset of $T$ with $\spn{U'}  
= \spn{U \cup \{\vr t\}}$.
We start by observing that $e \leq d-2$. 
Towards contradiction, suppose $e = d-1$, and take any $\vr u \in B_U$ such that
 $\dist{\hypp {\vr t}, \vr r + \vr u} = 
 \dist{\hypp{\vr t}\cap\spn T, \vr r + \vr u}
 < R_{d-1} = \eps \norm{\vr r + \vr u}$.
By Lemma \ref{lem:dist} 
(since $\dim{\spn U}=e=d-1$, we have $\hypp U \cap \spn T = \spn{\vr r}$)
we deduce that
 $\dist{\spn{\vr r}, \vr r + \vr u} = \norm{\vr u} > R_{d-1}$, a contradiction.
Thus 
$\dim{\spn{U'}} = e+1\leq d-1$.
In consequence, $U' \in {\cal U}_T$.
%

%
\begin{claim} \label{lem:c2}
$\vr r + \KK U  \subseteq \convhull(L_{U'})$.
\end{claim}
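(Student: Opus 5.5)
The plan is to split $\vr r$ into its component inside $\spn{U'}$ and its component orthogonal to $\spn{U'}$. I then absorb the first component into $\KK{U'}$, using the inductive construction of $\KK{U'}$ one level down. The second component I absorb via \Cref{lem:int} applied to $U'$ inside $\KK T$.

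\textbf{Decomposition.} Write $\vr r = \vr r_1 + \vr r_2$ with $\vr r_1 := \proj{\spn{U'}}(\vr r)$ and $\vr r_2 \perp \spn{U'}$. Since $\vr r \perp \spn U$ and $\spn U \subseteq \spn{U'}$, the vector $\vr r_1$ lies in $\hypp U \cap \spn{U'}$. This is a line (as $\dim{\spn{U'}} = e+1$), spanned by a unit vector $\vr w$. Also $\vr r_2 \in \hypp{U'} \cap \spn T$ and $\norm{\vr r_2} \leq \norm{\vr r} = r_d$. So $\vr r_2 \in \interval{-\vr s}{\vr s}$ for some $\vr s \in R_{U'}$, taking any $\vr s$ if $\vr r_2 = \vr 0$. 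By \Cref{lem:int} applied to $U' \in {\cal U}_T$, the set $\vr r_2 + \KK{U'}$ lies in $\interval{-\vr s}{\vr s} + \KK{U'}$, which is contained in the convex hull of $\{\pm\vr s\} + \KK{U'} \subseteq L_{U'}$. Hence
\[
\vr r + \KK U \;=\; \vr r_2 + (\vr r_1 + \KK U) \;\subseteq\; \vr r_2 + \KK{U'} \;\subseteq\; \convhull(L_{U'}),
\]
provided that $\vr r_1 + \KK U \subseteq \KK{U'}$.

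\textbf{Inclusion $\vr r_1 + \KK U \subseteq \KK{U'}$.} Now work inside $U'$, a set satisfying the Proviso with dimension $e+1$. Since $U$ is maximal in $T$, it is also maximal in $U'$, so $U \in {\cal U}_{U'}$. The set $R_U$ computed relative to $U'$ is $\{\pm r_{e+1}\vr w\}$. By \Cref{lem:int} for the pair $(U', U)$, we get $\interval{-r_{e+1}\vr w}{r_{e+1}\vr w} + \KK U \subseteq \KK{U'}$. It therefore suffices to show $\norm{\vr r_1} \leq r_{e+1}$. Note that $r_{e+1}^2 = R_{e+1}^2 - R_e^2 = R_e^2(1/\eps^2 - 1)$, so $r_{e+1} = R_e\cos(\alpha/2)/\sin(\alpha/2)$.

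\textbf{Bounding $\norm{\vr r_1}$ (the key step).} This is where the Case~2b hypothesis is used. Pick $\vr u \in B_U$ with $|\innerprod{\vr r + \vr u}{\vr t}| = \dist{\hypp{\vr t}, \vr r + \vr u} < R_e$. Decompose $\vr t = \vr t_U + \tau \vr w$ with $\vr t_U \in \spn U$; this is possible because $\vr t \in \spn{U'}$. Then $|\tau| = \dist{\spn U, \vr t} \geq \delta = \sin\alpha$, and $\norm{\vr t_U} = \sqrt{1-\tau^2}$. Since $\vr r \perp \spn U$, we have $\innerprod{\vr r}{\vr t} = \tau \innerprod{\vr r}{\vr w} = \pm\tau\norm{\vr r_1}$. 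Also $|\innerprod{\vr u}{\vr t}| = |\innerprod{\vr u}{\vr t_U}| \leq R_e\sqrt{1-\tau^2}$. By the triangle inequality,
\[
|\tau|\,\norm{\vr r_1} \;<\; R_e\bigl(1 + \sqrt{1-\tau^2}\bigr).
\]
The function $x \mapsto (1+\sqrt{1-x^2})/x$ is decreasing on $(0,1]$, so using $|\tau| \geq \sin\alpha$ gives
\[
\norm{\vr r_1} \;<\; R_e\,\frac{1+\cos\alpha}{\sin\alpha} \;=\; R_e\cot(\alpha/2) \;=\; r_{e+1}.
\]
This completes the claim. The only delicate point is this trigonometric estimate: it is exactly what forces the choice of $r_{e+1}$, via $(1+\cos\alpha)/\sin\alpha = \cot(\alpha/2)$. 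The remaining steps are bookkeeping with \Cref{lem:int}.
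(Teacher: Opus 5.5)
Your proof is correct. Its skeleton matches the paper's: split $\vr r$ into $\vr r_1=\proj{\spn{U'}}(\vr r)$ and $\vr r_2\perp\spn{U'}$, absorb $\vr r_1$ into $\KK{U'}$ via \Cref{lem:int} for the pair $(U',U)$, and absorb $\vr r_2$ using a segment $\interval{-\vr s}{\vr s}$ with $\vr s\in R_{U'}$. The genuine difference is in the key estimate $\norm{\vr r_1}\le r_{e+1}$.

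The paper's route:
\begin{itemize}
\item It first moves $\vr u$ onto the boundary sphere of $B_U$ by a perturbation/rescaling argument.
\item This gives $\dist{\hypp U\cap\spn{U'},\vr r_1+\vr u}=R_e$ while $\dist{\hypp{\vr t}\cap\spn{U'},\vr r_1+\vr u}<R_e$.
\item It then invokes \Cref{lem:dist}, the complementary-angle form of the $\delta$-distance property, justified by an isometry between the two pairs of subspaces. This yields $\norm{\vr r_1+\vr u}\le R_e/\eps=R_{e+1}$.
\item It finishes with Pythagoras, using $\norm{\vr u}=R_e$.
\end{itemize}

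You instead write $\vr t=\vr t_U+\tau\vr w$ and bound $\innerprod{\vr r}{\vr t}$ directly. You use only $|\tau|\ge\delta$, $\norm{\vr u}\le R_e$, and the monotonicity of $x\mapsto(1+\sqrt{1-x^2})/x$.

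What your version buys: it is a self-contained computation that needs neither \Cref{lem:dist} nor the boundary reduction. It also shows that $r_{e+1}=R_e\cot(\alpha/2)=R_e(1+\cos\alpha)/\sin\alpha$ is exactly the threshold. What the paper's version buys: it keeps a single geometric lemma that it also reuses in the $d=2$ warm-up and in showing $e\le d-2$ at the start of Case 2b.

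Two small points you should make explicit:
\begin{itemize}
\item The bound $|\tau|\ge\delta$ needs $\vr t\notin\spn U$. This follows from maximality of $U$, which gives $T\cap\spn U=U$, together with $\vr t\notin U$.
\item The inclusion $\vr r_2+\KK{U'}\subseteq\convhull(L_{U'})$ comes from your own convexity argument, $\interval{-\vr s}{\vr s}+\KK{U'}=\convhull(\{\pm\vr s\}+\KK{U'})$. It does not come from \Cref{lem:int} as stated, which only concludes inclusion in $\KK T$.
\end{itemize}
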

We apply Case 1 to $U'$ (as $\vr t \in U'\in {\cal U}_T$), and obtain
$
\proj{\vr t}(\vr r' + \KK {U'})  + \interval{-\vr t}{\vr t} \ \subseteq \ \KK T,
$
for every $\vr r' \in R_{U'}$, which implies
$
\proj{\vr t}(\convhull(L_{U'}))  + \interval{-\vr t}{\vr t} \ \subseteq \ 
\KK T.
$ 
By \Cref{lem:c2} we get
$
\proj{\vr t}(\vr r + \KK U)  + \interval{-\vr t}{\vr t} \ \subseteq \ 
\proj{\vr t}(\convhull(L_{U'}))  + \interval{-\vr t}{\vr t}.
$ 
Composing the two last inclusions yields \eqref{eq:topro}, namely:
$\proj{\vr t}(\vr r + \KK U)  + \interval{-\vr t}{\vr t} \ \subseteq \ 
\proj{\vr t}(\convhull(L_{U'}))  + \interval{-\vr t}{\vr t}
\ \subseteq \ \KK T$.

\begin{proof}[Proof of \Cref{lem:c2}]
Consider the projection of $\spn T$ onto the subspace $\spn{U'}$, and
let $\vr r' = \proj{\spn{U'}}(\vr r)$.
By Lemma \ref{lem:proj} we get
\begin{align} \label{eq:projproj}
\proj{\spn{U'}}(\vr r + \KK U) \ = \ \vr r' + \KK U.
\end{align}
Furthemore, as $\vr r$ is orthogonal to $\spn U$, its projection $\vr r'$ is also so.

By assumption, $\dist{\hypp {\vr t}, \vr r + B_U} < R_{e}$, which means that
there is a point $\vr u \in B_U$ with $\dist{\hypp{\vr t},\vr r+ \vr u} < R_e$.
W.l.o.g.~we may assume that $\vr u$ belongs to the sphere of $B_U$.
Indeed, if $\vr u$ is in the interior of $B_U$, 
perturb it slightly, if necessary, to get $\vr u\neq \vr 0$ and not colinear with $\vr t$.
Then scale it positively
and negatively to get two colinear radiuses of $B_U$ forming a line segment of length
$2R_e$. As the distance of the line segment from $\hypp{\vr t}$ is smaller than $R_e$,
and the line segment is not orthogonal to $\hypp{\vr t}$,
one of its ends is necessarily at distance smaller than $R_e$ from $\hypp{\vr t}$.
Thus we may replace $\vr u$ by that endpoint. 
In particular, $\norm{\vr u} = R_e$.

Consider the distance of $\vr r' + \vr u\in\spn{U'}$ to 
$\spn{\vr r'} = \hypp{U}\cap \spn{U'}$ and to $\hypp{\vr t}\cap\spn{U'}$.
We claim:
\begin{align} \label{eq:dist2}
\dist{\hypp{U}\cap \spn{U'}, \vr r' + \vr u} \ = \ R_e
\qquad\qquad
\dist{\hypp {\vr t} \cap \spn{U'}, \vr r' + \vr u} \ < \ R_e.
\end{align}
The left equality follows since $\vr r'$ and $\vr u$ are orthogonal, and therefore
$\dist{\spn{\vr r'}, \vr r' + \vr u} \ = \ \norm{\vr u}$.
To prove the right inequality, we notice that $\vr t \in U'$ and therefore the projection on $\spn{U'}$ 
does not change the distance to $\hypp{\vr t}$:
\[
\dist{\hypp {\vr t}\cap\spn{U'}, \vr r' + \vr u} \ = \ 
\dist{\hypp {\vr t}, \vr r + \vr u}
\]
while,
by assumption, we have $\dist{\hypp {\vr t}, \vr r + \vr u} < R_e$.
Having \eqref{eq:dist2}, we apply Lemma \ref{lem:dist} to get
$R_e \geq \eps  \norm{\vr r' + \vr u}$, which rewrites to
$\norm{\vr r' + \vr u}\leq R_e/\eps  = R_{e+1}$.
As $\norm{\vr u} = R_e$, relying on
Pythagoras (cf.~Fig.~\ref{fig:triangle}, right) we obtain
$\norm{\vr r'}\leq r_{e+1}$, and therefore using Lemma \ref{lem:int} we get
\[
\vr r' + \KK U  \ \subseteq \ 
\KK {U'}.
\]
%
Decompose $\vr r = \vr r' + \vr r''$, where $\vr r''$ is orthogonal to $\spn{U'}$, and rewrite
add $\vr r''$ to both sides of
the above inclusion:
\[
\vr r + \KK U \ = \ \vr r'' + \big(\vr r' + \KK U \big) \ \subseteq \ \vr r'' + \KK {U'}.
\]
Since $\norm{\vr r''} \leq \norm{\vr r} = r_d$, and $\vr r''$ 
is orthogonal to $\KK {U'}$,
we have
$\vr r'' + \KK {U'} \subseteq \convhull(L_{U'})$,
and in consequence
$
\vr r + \KK U  \ \subseteq \ \convhull(L_{U'})
$, as required.
\end{proof}

\section{Generalization to Vector Partitioning}
\label{sec:partition}

%
%
%

In this section, we consider the online vector balancing game from the introduction. We prove the same bounds on Chooser's greedy algorithm for this game as in the previous variant. Similarly to the 2-partition balancing game, Chooser should minimize the value  
$\mathrm{val}(J_1,\ldots, J_p)$ with the assignment of the vector $\mathbf t_n$, given the assignment of $\mathbf t_1,\ldots, \mathbf t_{n-1}$. 
To this end, define $\mathbf a_i^{(n)}:=\sum_{j\in J_i\cap [n]}\mathbf t_i$
 as the partial sum of the vectors indexed by $j\in J_i, j\leq n$.
We further define the matrix $A^{(n)}=\left(\mathbf a_1^{(n)} \hdots\ \mathbf a_p^{(n)}\right)\in \R^{d\times p}$ with the $i$-th column $\mathbf a_i^{(j)}$ for all $i\in \{1,\ldots, p\}$. Note that $\mathrm{val}(J_1,\ldots,J_p)=\displaystyle\max_{j\in [n]}\displaystyle\max_{i_1\in [p]}\displaystyle\max_{i_2\in [p]}\|\mathbf a_{i_1}^{(n)}-\mathbf a_{i_2}^{(n)}\|$.
 \begin{definition}
    Let $n\in \N_{\geq 0}$. A sequence $A^{(0)}=0^{d\times p}, A^{(1)},\ldots, A^{(n)}$ of $(d\times p)$-matrices is a \emph{greedy $T$-matrix sequence} if either $n=0$, or 
    \begin{itemize}
        \item $A^{(0)}=0^{d\times p}, A^{(1)},\ldots, A^{(n-1)}$ is a greedy $T$-matrix sequence, and 
        \begin{itemize}
            \item either $A^{(n-1)}=A^{(n)}$,
               \item or there exists some $\mathbf t_j\in T$ such that 
$         A^{(n)}_i= A^{(n-1)}_i+  \mathbf t_n \cdot \mathbf e_i^\top$
    where $n\in J_{i'(j)}$ and 
     $ i'(j)\in
     \mathrm{argmin}_{i\in[p]}\langle \mathbf a_i^{(n-1)},\mathbf t_j\rangle$.
        \end{itemize}
     
    \end{itemize}

\end{definition}
The reader might notice that the first condition in the above definition does not have an analogue in the $2$-partition setting. Indeed, this additional condition will ensure us to observe that a restriction of a greedy $T$-matrix sequence is again a greedy $T$-matrix sequence, which would otherwise have not been true (cf.~\cref{lemma:restrictiontovecs}).

A greedy partition, as defined in the introduction, gives rise to a greedy $T$-matrix sequence. In accordance with this, the notion of greedy $T$-matrices extend the original problem setting. 

\begin{observation}\label{obs:matrices2columns}If 
 $A^{(0)},\ldots, A^{(n)}\in \R^{d\times 2}$ is a greedy $T$-matrix sequence, then 
the sequence $(\mathbf a_1^{(j)}-\mathbf a_2^{(j)})_{\{j: A^{(j)}\neq A^{(j-1)}\}}$ is a greedy $T$-sequence.

\end{observation}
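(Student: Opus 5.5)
The plan is a direct induction on $n$, unfolding the definition of a greedy $T$-matrix sequence and checking that each non-trivial step of the matrix sequence produces exactly one greedy step for the difference vector. Write $\vr x^{(j)} := \vr a_1^{(j)} - \vr a_2^{(j)}$. Since $A^{(0)} = 0^{d\times 2}$, we have $\vr x^{(0)} = \vr 0$, matching the required start of a greedy $T$-sequence. Indices $j$ with $A^{(j)} = A^{(j-1)}$ leave $\vr x^{(j)}$ unchanged. Deleting them from the sequence therefore only removes repeated consecutive entries, so after deletion the remaining differences are consecutive iterates.

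Now consider a step with $A^{(j)} \neq A^{(j-1)}$. By definition there is some $\vr t \in T$ that is added to a column $i' \in \mathrm{argmin}_{i\in\{1,2\}} \innerprod{\vr a_i^{(j-1)}}{\vr t}$. We split into two cases according to which column is chosen.

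\emph{Case $i'=1$.} Then $\vr x^{(j)} = \vr x^{(j-1)} + \vr t$, so the sign is $s = +1$. The argmin condition gives $\innerprod{\vr a_1^{(j-1)}}{\vr t} \le \innerprod{\vr a_2^{(j-1)}}{\vr t}$, which is equivalent to $\innerprod{\vr x^{(j-1)}}{\vr t} \le 0$. That is precisely the greedy condition $\innerprod{\vr x^{(j-1)}}{s\vr t}\le 0$.

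\emph{Case $i'=2$.} Then $\vr x^{(j)} = \vr x^{(j-1)} - \vr t$, so the sign is $s = -1$. The argmin condition now gives $\innerprod{\vr x^{(j-1)}}{\vr t} \ge 0$, i.e.\ $\innerprod{\vr x^{(j-1)}}{-\vr t} \le 0$, which is again the greedy condition.

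Ties, where both columns attain the minimum, are harmless. In that case $\innerprod{\vr x^{(j-1)}}{\vr t}=0$, and greedy permits either sign, so whichever column was chosen is consistent with some run of greedy.

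It remains to assemble these steps. Concatenating them over all non-trivial indices $j$ yields an input sequence $\vr t_{j}$ together with signs such that the resulting iterates are exactly the subsequence $(\vr x^{(j)})$ over $\{j : A^{(j)}\neq A^{(j-1)}\}$, prefixed by $\vr x^{(0)}=\vr 0$. By the definition of a greedy $T$-sequence, this subsequence is a greedy $T$-sequence.

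There is no real obstacle here. The only points needing care are the bookkeeping when the trivial steps are deleted, and the observation that the initial term $\vr x^{(0)}=\vr 0$ should be included in the greedy $T$-sequence.
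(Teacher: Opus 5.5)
Your proof is correct and matches the argument the paper relies on. The paper states this observation without proof, justifying it only by the remark in the introduction that assigning $+1$ to $J_1$ and $-1$ to $J_2$ gives a signing consistent with greedy. Its key step is your equivalence $\innerprod{\vr a_1}{\vr t} \le \innerprod{\vr a_2}{\vr t} \iff \innerprod{\vr a_1-\vr a_2}{\vr t}\le 0$, which is also the step used in the proof of Lemma~\ref{lemma:restrictiontovecs}. Your handling of the trivial steps, the ties, and the missing initial term $\vr x^{(0)}=\vr 0$ correctly fills in bookkeeping that the statement glosses over.
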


\begin{lemma}\label{lemma:restrictiontovecs}
    Let $A^{(0)},\ldots, A^{(n)}$ be a greedy $T$-matrix sequence, and $i_1\neq i_2\in [p]$. Then 
the sequence 
    $(\mathbf a_{i_1}^{(0)}\mathbf a_{i_2}^{(0)}),\ldots, (\mathbf a_{i_1}^{(n)}\mathbf a_{i_2}^{(n)})$  of $(d\times 2)$-matrices is a greedy $T$-matrix sequence.
\end{lemma}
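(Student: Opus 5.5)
The proof goes by induction on $n$, following the recursive structure of the definition of a greedy $T$-matrix sequence. For $n=0$ the restricted sequence consists of the single zero $(d\times 2)$-matrix, which is a greedy $T$-matrix sequence by definition.

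For the inductive step, assume that $(\mathbf a_{i_1}^{(j)}\ \mathbf a_{i_2}^{(j)})_{j=0}^{n-1}$ is a greedy $T$-matrix sequence. By the definition applied to the original sequence, exactly one of two cases occurs at step $n$. If $A^{(n)}=A^{(n-1)}$, then both restricted columns are unchanged. The restricted sequence then satisfies the first alternative, so it is again greedy. Otherwise some $\mathbf t\in T$ was added to a column $i'\in\mathrm{argmin}_{i\in[p]}\langle \mathbf a_i^{(n-1)},\mathbf t\rangle$, and we split further on the index $i'$.

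\emph{Subcase $i'\notin\{i_1,i_2\}$.} Neither column $i_1$ nor $i_2$ changes, so the restricted matrices at steps $n-1$ and $n$ coincide. The stalling alternative $A^{(n-1)}=A^{(n)}$ applies, and the restricted sequence is greedy. This is exactly the reason for including that alternative in the definition.

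\emph{Subcase $i'\in\{i_1,i_2\}$, say $i'=i_1$.} Column $i_1$ minimizes $\langle \mathbf a_i^{(n-1)},\mathbf t\rangle$ over all $i\in[p]$. In particular it minimizes this quantity over $i\in\{i_1,i_2\}$, i.e.\ $i_1$ lies in the argmin for the two-column matrix. Hence adding $\mathbf t\,\mathbf e_1^\top$ to the restricted matrix is a legal greedy step, and the restricted sequence is greedy up to step $n$.

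The argument is mostly bookkeeping. The one point needing care is that a minimizer over $[p]$ remains a minimizer over any subset containing it. Ties cause no trouble, because the definition only requires membership in the argmin, not any particular tie-breaking rule.
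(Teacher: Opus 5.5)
Your proof is correct and follows essentially the same argument as the paper. It is an induction on $n$ in which the stalling alternative absorbs the case where the updated column lies outside $\{i_1,i_2\}$, and in which a minimizer over $[p]$ remains a minimizer over $\{i_1,i_2\}$. You also make the case $A^{(n)}=A^{(n-1)}$ and the handling of ties explicit, which the paper leaves implicit.
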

\begin{proof}
    We prove the statement by induction. For $n=0$, the claim is correct. We assume that $(\mathbf a_{i_1}^{(0)}\mathbf a_{i_2}^{(0)}),\ldots, (\mathbf a_{i_1}^{(n-1)}\mathbf a_{i_2}^{(n-1)})$ is a greedy $T$-matrix subsequence.
    
      Unless $A^{(n)}=A^{(n-1)}+\mathbf t_n\cdot \mathbf e_{i'(j)}$ with $i'(j)\in \{i_1,i_2\}$, we have $(\mathbf a_{i_1}^{(n-1)}\mathbf a_{i_2}^{(n-1)})=(\mathbf a_{i_1}^{(n)}\mathbf a_{i_2}^{(n)})$ and the statement is valid. Thus, we may assume $A^{(n)}=A^{(n-1)}+\mathbf t_n\cdot \mathbf e_{i'(j)}$ with $i'(j)=i_1$ without loss of generality. Because $A^{(i)}$ is a greedy $T$-matrix sequence, it must hold that 
    $\langle \mathbf a_{i_1},\mathbf t_j\rangle\leq \langle \mathbf a_{i_2},\mathbf t_j\rangle$ and the inductive step follows.
\end{proof}

By the above observation combined with \cref{lemma:restrictiontovecs}, we obtain that the discrepancy between any pair of partitions is that of a $T$-greedy sequence, which is bounded by \cref{thm:main}. Thus, we are ready to conclude the discrepancy bounds for greedy partitioning.

\greedypartition*
\begin{proof}
    We define $\mathbf a_i^j$ as well as $A^{(j)}$ as above for $i\in [p]$, $j\in \{0\}\cup[n]$. For all fixed ($j,i_1,i_2$) that attain $\mathrm{val}(J_1,\ldots,J_p)=\max_{j\in [n]}\max_{i_1\in [p]}\max_{i_2\in [p]}\|\mathbf a_{i_1}^{(n)}-\mathbf a_{i_2}^{(n)}\|$, we have 
    $\|\mathbf a_{i_1}^{(j)}-\mathbf a_{i_2}^{(j)}\|\leq G(T)$.
    Indeed, by Lemma \ref{lemma:restrictiontovecs}, the sequence $(\mathbf a_{i_1}^{(0)}\mathbf a_{i_2}^{(0)}),\ldots, (\mathbf a_{i_1}^{(n)}\mathbf a_{i_2}^{(n)})$ is a greedy $T$-matrix sequence, and by Observation \ref{obs:matrices2columns}, the sequence  $(\mathbf a_1^{(j)}-\mathbf a_2^{(j)})_{\{j: A^{(j)}\neq A^{(j-1)}\}}$ is a $T$-greedy sequence which includes every distinct value of $(\mathbf a_{i_1}^{(j)}-\mathbf a_{i_2}^{(j)})$. Therefore, the estimation follows by Theorem \ref{thm:main}.
\end{proof}

 
\section{Application to Scenario Scheduling}
\label{sec:scheduling}

In this section, we show that Conjecture \ref{conj:minsum} for the problem
\textsc{MinAvgSTC($d$)} from the introduction is true when the weights
are exponentially decreasing. Precisely, we give the proof of
\Cref{thm:scheduling-conj}, which yields the validity of the conjecture when
the weights satisfy $\frac{w_j}{w_{j+1}}\geq 2d$ for all $j\in [n-1]$.

Throughout the discussion, we will consider partial assignments of jobs to machines, or equivalently, partitions of $[j]$ for some $j\in [n]$. We denote such partitions by the tuple $(J_1\cap [j], \ldots, J_m\cap [j])$, that is, as the restriction of a full partition $(J_1,\ldots, J_m)$. We start by expressing the terms $|J_i\cap S_k\cap [j]|$ in terms of vectors.
\begin{definition}
    Let $j\in [n]$. For a fixed partial assignment $(J_1\cap [j],\ldots, J_m\cap [j])$, define $\mathbf a_i^{(j)}(J_1,\ldots,J_m)=(a^{ij}_k)_{\{k\in [d]\}}$ with $a^{ij}_k=|J_i\cap S_k\cap [j]|$.
\end{definition}
For the defined vectors, we observe that  optimizing the objective function \eqref{eq:schedobj} while assigning a single job corresponds to the greedy algorithm of the vector balancing game on $p=m$ partitions.
Here, the vector that arrives is the incidence vector of the job $j$ in the respective scenarios, and a machine with load vector $\mathbf a_i^{(j)}$ minimizing the scalar product is chosen.

\begin{lemma}\label{lem:schedisbalancing}
 Let $(J_1\cap [j-1],\ldots, J_m\cap [j-1])$ be a partial schedule. Any extension $(J_1\cap [j],\ldots, J_m\cap [j])$ of this schedule that minimizes $\mathrm{obj}(J_1\cap [j],\ldots, J_m\cap [j])$, satisfies $j\in J_{i'}$ for some
\[i'\in\mathrm{argmin}_{i\in[m]}\langle \mathbf a_i^{j-1}, \mathbf t^{j}\rangle
\text{ with }\ \mathbf t^{j}_k=\begin{cases}
    1, &j\in S_k\\
    0, &\text{otherwise}.
\end{cases}\]
\end{lemma}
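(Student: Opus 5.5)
The plan is to compute directly how the objective changes when job $j$ is appended to machine $i$, with the assignment of jobs $1,\dots,j-1$ held fixed. The objective \eqref{eq:schedobj} restricted to $[j]$ is a sum over $k$, over machines, and over jobs $j'\le j$ on that machine. Let $(J_1\cap[j-1],\dots,J_m\cap[j-1])$ be the given partial schedule. For every job $j'<j$, the completion-time factor $|J_{i''}\cap S_k\cap[j']|$ only involves jobs up to $j'$. So adding $j$ to any machine leaves all terms for $j'<j$ unchanged.

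The only new contribution is the term for job $j$ itself. If $j$ is placed on machine $i$, this term is
\[
\sum_{k:\, j\in S_k} w_j\bigl(|J_i\cap S_k\cap[j-1]|+1\bigr) \;=\; w_j\Bigl(\sum_{k=1}^d t^{j}_k\, a^{i,j-1}_k + \sum_{k=1}^d t^j_k\Bigr),
\]
where the second sum does not depend on $i$. This is just $w_j\bigl(\langle \mathbf a_i^{(j-1)},\mathbf t^{j}\rangle + \|\mathbf t^j\|_1\bigr)$.

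It follows that
\[
\mathrm{obj}(J_1\cap[j],\dots,J_m\cap[j]) \;=\; \mathrm{obj}(J_1\cap[j-1],\dots,J_m\cap[j-1]) + w_j\|\mathbf t^j\|_1 + w_j\langle \mathbf a_i^{(j-1)},\mathbf t^j\rangle .
\]
The first two summands are independent of the choice $i$, and $w_j\ge 0$. Hence any minimizing extension places $j$ on some $i'\in\mathrm{argmin}_i\langle\mathbf a_i^{(j-1)},\mathbf t^j\rangle$.

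The only care needed concerns the case $w_j=0$. Then every extension minimizes the objective, and the statement as written would fail unless $\mathbf t^j=0$ or all inner products coincide. I would handle this by assuming positive weights, which is implicit in the hypothesis $w_j\ge 2dw_{j+1}$ used later, or else by noting that with $w_j>0$ the minimizers are exactly the argmin set. Beyond this bookkeeping check that earlier completion times are unaffected, I expect no real obstacle.
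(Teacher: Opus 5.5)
Your proposal is correct and follows the paper's proof: both compute the increment $w_j\bigl(\langle \mathbf a_i^{(j-1)},\mathbf t^j\rangle+\|\mathbf t^j\|^2\bigr)$ (your $\|\mathbf t^j\|_1$ equals $\|\mathbf t^j\|^2$ for a $0/1$ vector) and note that only the inner product depends on the machine $i$. Your $w_j=0$ caveat is a real point the paper leaves implicit, since it effectively divides by $w_j$. However, $w_j\ge 2dw_{j+1}$ alone does not force positivity (all-zero weights satisfy it), so you should assume $w_j>0$ explicitly rather than derive it from that hypothesis.
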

\begin{proof}
   Let $i'$ be the machine that the job $j$ is assigned to. The objective values $ \mathrm{obj}(J_1\cap [j],\ldots,J_m\cap [j])$ and $ \mathrm{obj}(J_1\cap [j-1],\ldots,J_m\cap [j-1])$ differ by 
    \[
       \sum_{k: j\in  S_k}w_{j}\cdot|J_{i'}\cap S_k\cap[j]]|=w_{j}\cdot\langle \mathbf a_{i'}^{(j)}, \mathbf t^{j}\rangle=w_{j}\cdot\langle \mathbf a_{i'}^{(j-1)}+\mathbf t^{j}, \mathbf t^{j}\rangle=w_{j}(\langle \mathbf a_{i'}^{(j-1)}, \mathbf t^{j}\rangle+\|\mathbf t^{j}\|^2).
       \]
       Here, the terms $w_{j}$ and $\|\mathbf t^{j}\|^2$ do not depend on the choice of the assignment, so the objective function is minimized if and only if $\langle \mathbf a_{i'}^{(j-1)}, \mathbf t^{j}\rangle$ is minimized.
\end{proof}

Next, we argue that if $w_j\geq 2dw_{j+1}$ for all $j\in [n-1]$, then every optimal solution must follow a greedy trajectory, i.e., in any optimal solution that minimizes $\mathrm{obj}(J_1,\ldots,J_m)$, the $j$-th job is placed so that $\mathrm{obj}(J_1\cap [j],\ldots, J_m\cap [j])$ is as small as possible given the fixed partitioning of the first $j-1$ jobs.

\begin{lemma}\label{lem:optislex}
    Let $n$ jobs be given such that $w_j\geq 2dw_{j+1}$ for all $j\in [n-1]$ and let $(J_1,\ldots, J_m)$ be an optimal partition of the jobs with respect to $\mathrm{obj}$. Then for every $j\in [n]$, we have
    $j\in J_i$  with  $i\in\mathrm{argmin}_{i\in[m]}\{\mathrm{obj}(J_1\cap [j],\ldots J_m\cap [j]\},$
    given $(J_1\cap [j-1],\ldots J_m\cap [j-1])$.
\end{lemma}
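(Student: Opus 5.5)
We argue by an exchange argument. Fix an optimal partition $(J_1,\ldots,J_m)$ and suppose, towards a contradiction, that some job violates the claimed property; let $j$ be the smallest such index. Then the partial schedule $(J_1\cap[j-1],\ldots,J_m\cap[j-1])$ is fixed, and job $j$ sits on a machine $i$ whose increment to the objective is strictly larger than that of some machine $i^*$. By the computation in the proof of \Cref{lem:schedisbalancing}, the increment for job $j$ on machine $i$ is $w_j(\langle \mathbf a_i^{(j-1)},\mathbf t^j\rangle+\|\mathbf t^j\|^2)$. All inner products are integers, so the strict inequality gives that the contribution of job $j$ itself drops by at least $w_j$ when we move it from $i$ to $i^*$.

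Next we construct a competing full solution. Keep jobs $1,\ldots,j-1$ as they are, put job $j$ on $i^*$, and keep all jobs $j'>j$ on their original machines. We then compare objective values. The contributions of jobs $<j$ are unchanged, since $|J_i\cap S_k\cap[j']|$ for $j'<j$ only involves the first $j'$ jobs. The contribution of job $j$ decreases by at least $w_j$. For a later job $j'>j$, its term $w_{j'}|J_{i(j')}\cap S_k\cap[j']|$ changes only through the presence or absence of job $j$ on its machine. Since job $j$ left machine $i$ and joined machine $i^*$, each count changes by at most $1$ per scenario $k$. Hence the total change from later jobs is at most $\sum_{j'>j} w_{j'}\cdot d$.

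Using $w_{j'}\le w_j/(2d)^{j'-j}$, we bound
\[
d\sum_{j'>j}w_{j'}\le d\,w_j\sum_{s\ge1}(2d)^{-s}= \frac{d\,w_j}{2d-1}<w_j .
\]
This holds since $d\ge1$ gives $2d-1\ge d$, with strict inequality when $d\ge 2$. For $d=1$ the bound is $\le w_j$, so in that case we need a finer count. The cleanest fix is to note that the change per later job is at most its number of scenarios containing both $j$ and $j'$. Alternatively, one can observe that the geometric tail is a strict sum only for finitely many terms: the finite sum $\sum_{s=1}^{n-j}(2d)^{-s}$ is strictly less than $1/(2d-1)$, which gives strictness for all $d\ge1$. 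Therefore the new solution has strictly smaller objective, contradicting optimality.

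The main obstacle is the bookkeeping of how moving job $j$ affects the later terms. One must check that each later job's count in each scenario shifts by at most one, which is immediate because only one job moves. One must also ensure that the inequality is strict. For this, the finite-sum remark above suffices.
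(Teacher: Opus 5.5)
Your proof is correct and follows the paper's exchange argument essentially step for step: move only job $j$ to the better machine, gain at least $w_j$ by integrality, lose at most $d\,w_{j'}$ for each later job $j'$, and conclude via $-w_j + d(w_{j+1}+\cdots+w_n)<0$. Your handling of strictness through the finite geometric sum is a bit more careful than the paper, which leaves this last inequality to the reader; it tacitly uses $w_j>0$, which follows from the strict gap between the increments.
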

\begin{proof}
    Assume that this is not the case. Fix a partition $(J_1,\ldots, J_m)$ that minimizes $\mathrm{obj}$ and let $j$ be the smallest index such that $(J_1\cap[j],\ldots, J_m\cap [j])$ is not optimal  with respect to $\mathrm{obj}$. Then, by the minimality of $j$, $(J_1\cap[j-1],\ldots, J_m\cap [j-1])$ is optimal and there is a machine $i'\in [m]$ such that 
    \begin{align}\label{eq:primebetter}
    \mathrm{obj}\left(J_1\cap [j-1],\ldots, (J_{i'}\cap[j-1])\cup\{j\},\ldots, J_m\cap [j-1]\right)<\mathrm{obj}(J_1\cap [j],\ldots, J_m\cap [j]).
\end{align}
    We fix the assignment corresponding to the left hand side, that is, for the index $i'$ as above, we define $(J_1',\ldots, J_m')=(J_1\setminus\{j\},\ldots, J_{i'}\cup\{j\},\ldots, J_m\setminus\{j\})$. We would like to show that $\mathrm{obj}(J_1',\ldots, J_m')-\mathrm{obj}(J_1,\ldots, J_m)<0$, which contradicts the optimality of $(J_1,\ldots, J_m)$ and proves the statement.
      To this end, we consider the contribution of each job to the objective value. The first $j-1$ jobs contribute equally to the objective value in both schedules as they are assigned identically. Recall that $j\in J'_{i'}$ and assume further $j\in J_i$. For the contribution of the $j$-th job, we have 
\begin{align}
   \label{eq:start}\sum_{k\in [d]: j\in S_k} \hspace{-0.25cm}w_j\cdot|J'_{i}\cap S_k\cap [j]|&= w_j\cdot\sum_{k\in [d]: j\in S_k} \hspace{-0.25cm} |J'_i\cap S_k\cap[j]| \\&\leq w_j\cdot\left(\sum_{k\in [d]: j\in S_k} \hspace{-0.25cm} |J_i\cap S_k\cap [j]|-1\right)\\&=\sum_{k\in [d]: j\in S_k} w_j\cdot|J_i\cap S_k\cap[j]|-w_j.
\end{align}
    Here, the inequality follows by \eqref{eq:primebetter}, and the fact that both sums are integral. Finally, we analyze the contribution of jobs $j'\in \{j+1,\ldots, n\}$. We note that $j'\in J_i$ if and only if $j'\in J'_i$, and assuming this, compute 
   \begin{align}
   &\textcolor{white}{\leq } \sum_{k\in [d]: j'\in S_k} w_{j'}|J'_i\cap S_k\cap [j']|= w_{j'}\sum_{k\in [d]: j'\in S_k} |J'_i\cap S_k\cap [j']|\\&\geq w_{j'}\left(\sum_{k\in [d]: j'\in S_k} (|J_i\cap S_k\cap[j']|-1)\right)\geq\sum_{k\in [d]: j'\in S_k} w_{j'}|J_i\cap S_k\cap [j']|-dw_{j'}.\label{eq:end}
\end{align}
Here, the first inequality follows because $(J'_i\cap [j'])\setminus(J_i\cap [j'])\subseteq \{j\}$.
    Combining inequalities \eqref{eq:start}-\eqref{eq:end} yields 
    $\mathrm{obj}(J'_1,\ldots J'_m)-\mathrm{obj}(J_1,\ldots J_m)=-w_j+d(w_{j+1}+\ldots+ w_n)<0$,
    as claimed. The reader can verify the last inequality by induction, using $w_j\geq 2dw_{j+1}$ for all indices $j\in [n-1]$.
\end{proof}

We are now ready to prove the conjecture in the case of exponentially
decreasing weights.

\begin{proof}[Proof of \Cref{thm:scheduling-conj}]
    By Lemma \ref{lem:optislex}, every optimal partition $(J_1,\ldots, J_m)$ minimizes the value   $\mathrm{obj}(J_1\cap [j],\ldots J_m\cap [j])$ for all $j$, given $(J_1\cap [j-1],\ldots J_m\cap [j-1])$. Lemma \ref{lem:schedisbalancing} implies that $(\mathbf a_1^{(j)}\hdots \mathbf a_m^{(j)})_{j=0,\ldots,n}$ is a greedy $T$-sequence (with $\mathbf a_i^{(0)}=\mathbf 0$) and thus by Theorem \ref{thm:greedy-partition}, we have 
     \[\max_{j\in [n]}\|\displaystyle\max_{i\in [m]}\mathbf  a_{i_1}^{(j)}-\min_{i\in [m]}\mathbf a_{i_2}^{(j)}\|\leq G(\{0,1\}^d)\] 
     and thus
          \[\max_{j\in[n], k\in [K]}|\mathbf \displaystyle\max_{i\in [m]}\mathbf  a_{i_1}^{(j)}-\min_{i\in [m]}\mathbf a_{i_2}^{(j)}|\leq G(\{0,1\}^d)\] 
     where the right-hand side $g(d)\coloneqq G(\{0,1\}^d)$ only depends on $d$ and not on $n$, as desired.
\end{proof}

\section{Lower bound}
\label{sec:lower-bound}

In this section, we show a lower bound on $G(T)$ for $T \subseteq {\cal S}^{2d-1}$, 
namely we prove \Cref{lem:greedy-lower-bound}. 
The lower bound depends on the dimension $d=\dim{\spn T}$ and $\delta_T$.
\lowerbound*
The rest of this section is devoted to the proof of \Cref{lem:greedy-lower-bound}. 
We fix $d \in \N_+$ and $\delta \in \R_{\geq 0}$ and construct set 
$T \subseteq \R^{2d}$ with $\delta_T = \delta$ and $G(T) \geq \sqrt{d}/\delta$. 
Let $\vr e_i \in \R^{2d}$ be the vector with a one in the $i$-th coordinate and zero elsewhere. 
For $i \in [d]$ we define vectors $\vr u_i, \vr v_i$ as follows:
\begin{equation*}
    \vr u_i = \vr e_{2i} \quad \vr v_i = - \sqrt{1-\delta^2} \cdot \vr e_{2i} + \delta \cdot \vr e_{2i-1} 
\end{equation*}
The intuition behind this construction is as follows. By using $d$ pairs of vectors $\vr u_i$ and $\vr v_i$, we can ensure that $G(T) \geq \sqrt{d}/\delta$  Each such pair of vectors allows us to increment  the square of the norm of point reachable by a $[-1,1]T$-greedy sequence by $\frac{1}{\delta^2}$, which accumulates over all $d$ pairs. We define $T$ as follows:
\begin{equation*}
    T = \bigcup_{i \in [d]} \{\vr u_i, \vr v_i\}
\end{equation*}
The above idea follows from observing the behaviour of a two-dimensional greedy sequence
$x = \vr x_0, \vr x_1, \ldots, \vr x_n$ of vectors in $\R^2$ such that
\[
\vr x_{i+1} - \vr x_i \in \left\{
\sqrt{1-\delta^2}\,\vr u_1,\;
\vr v_1,\;
-\sqrt{1-\delta^2}\,\vr u_1,\;
-\vr v_1
\right\},
\]
and extending this construction to $d$ parallel copies. See \cref{fig:lower_bound} in the introduction for an illustration of the two dimensional case. 

%
%
%
%
%
%
%

\begin{claim}
    The set $T \subseteq {\cal S}^{2d-1}$ and $\delta_T = \delta$.
\end{claim}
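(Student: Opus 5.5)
The plan is to verify both parts of the claim directly from the definition of the vectors, reducing the distance computation to two-dimensional blocks. First, membership in ${\cal S}^{2d-1}$: clearly $\norm{\vr u_i} = \norm{\vr e_{2i}} = 1$. Since $\vr e_{2i-1}$ and $\vr e_{2i}$ are orthonormal, $\norm{\vr v_i}^2 = (1-\delta^2) + \delta^2 = 1$. So $T \subseteq {\cal S}^{2d-1}$.

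For $\delta_T$, I will use the orthogonal decomposition $\R^{2d} = \bigoplus_{i \in [d]} E_i$ with $E_i := \spn{\{\vr e_{2i-1}, \vr e_{2i}\}}$, noting that $\vr u_i, \vr v_i \in E_i$. Fix $U \subseteq T$ and $\vr t \in T \setminus \spn U$, with $\vr t \in E_j$. Since each vector of $U$ lies in a single block, we have $\spn U = \bigoplus_i \spn{U \cap E_i}$. Hence the orthogonal projection of $\vr t$ onto $\spn U$ equals its projection onto $\spn{U \cap E_j}$, because $\vr t$ is orthogonal to every other block. This gives
\[
\dist{\spn U, \vr t} \ = \ \dist{\spn{U\cap E_j}, \vr t},
\]
and likewise $\vr t \notin \spn U$ is equivalent to $\vr t \notin \spn{U \cap E_j}$. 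Since $U \cap E_j \subseteq \{\vr u_j, \vr v_j\}$ and these two vectors are linearly independent (as $\delta > 0$), only two cases remain.

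If $U \cap E_j = \emptyset$, the distance is $\norm{\vr t} = 1$. Otherwise $U \cap E_j$ consists of the single vector of $\{\vr u_j,\vr v_j\}$ other than $\vr t$. For two unit vectors $\vr a, \vr b$, the distance from $\vr a$ to $\spn{\vr b}$ is $\sqrt{1 - \innerprod{\vr a}{\vr b}^2}$. Here $\innerprod{\vr u_j}{\vr v_j} = -\sqrt{1-\delta^2}$, so the distance is $\sqrt{1-(1-\delta^2)} = \delta$ in both orderings.

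Therefore the minimum in \Cref{def:delta} equals $\min\{1,\delta\} = \delta$, since $\delta \in (0,1)$. This minimum is attained, for instance at $U = \{\vr v_1\}$ and $\vr t = \vr u_1$. Consequently $T$ satisfies the $\delta$-distance property, and it fails the $\delta'$-distance property for every $\delta' > \delta$, so $\delta_T = \delta$.

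The only step needing care is the reduction to a single block, that is, that the distance to $\spn U$ splits along the orthogonal blocks. This follows from the orthogonality of the $E_i$, and everything else is a direct computation.
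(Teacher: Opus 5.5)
Your proof is correct and takes essentially the same approach as the paper: check the norms, then reduce each point-to-subspace distance to the two-dimensional block containing $\vr t$ and compute $\sqrt{1-(1-\delta^2)}=\delta$. The only difference is that the paper first restricts to the maximal sets $U=T\setminus\{\vr t\}$, using monotonicity of the distance and linear independence, and leaves the block-orthogonality step implicit; you instead handle every $U$ through $U\cap E_j$ and state that reduction explicitly.
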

\begin{proof}
Observe, that for every $i \in [d]$ we have $\|u_i\| = 1$ and $\|v_i\| = 1 - \delta^2 + \delta^2 = 1$. Hence $T \subseteq {\cal S}^{2d-1}$.

To show that $\delta_T = \delta$ we aim at proving:
\[
\min \{\dist{\spn U,\vr t}: U \subset T, \vr t \in T \setminus \spn U\} = \delta.
\]
Notice that the minimum on the left-hand side is always attained when $\dim{\spn U} = \dim{\spn T} - 1$. Since all $\vr v_1, \vr u_1, \vr v_2, \vr u_2 \ldots, \vr u_d, \vr v_d$ 
are linearly independent, we observe the following for every $t \in T$:
\begin{equation*}
\min \{\dist{\spn U,\vr t}: U \subset T, \vr t \in T \setminus \spn U\} = 
    \min \{\dist{\spn U,\vr t}: U = T \setminus \{\vr t, -\vr t\}\}
\end{equation*}
Observe, that for $i \in [d]$ and $\vr t \in \{\vr v_i, -\vr v_i\}$ we have:
\begin{equation*}
    \min \{\dist{\spn U,\vr t}: U = T \setminus \{\vr t, -\vr t\}\} =  \dist{\spn {\vr u_i},\vr t} = \delta 
\end{equation*}
Similarly, for $i \in [d]$ and $\vr t \in \{\vr u_i, -\vr u_i\}$ using the distance of a point to a line, we have:
\begin{align*}
    \min \{\dist{\spn U,\vr t}: U = T \setminus \{\vr t, -\vr t\}\} &=  \dist{\spn {\vr v_i},\vr t} = \\ &\sqrt{\|\vr t\|^2 - \frac{\innerprod{\vr t}{\vr v_i}^2}{\|\vr v_i\|}} = \sqrt{1 - (1-\delta^2)} = \delta 
\end{align*}

\end{proof}
Since $T \subseteq {\cal S}^{2d-1}$ and $\delta_T = \delta$ to finish the proof of \Cref{lem:greedy-lower-bound} it is enough to prove the following claim:
\begin{claim}\label{clm:absorbing_x}
    There exists a $[-1,1]T$-greedy sequence $\vr x_0, \vr x_1, \ldots, \vr x_n$ such that $\|\vr x_n\| \geq \frac{\sqrt{d}}{\delta}$.
\end{claim}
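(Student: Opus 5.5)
We build the greedy sequence explicitly, first in the plane of a single pair $(\vr u_i,\vr v_i)$, i.e.\ in coordinates $(2i-1,2i)$, and then concatenate $d$ such blocks. The blocks live in mutually orthogonal coordinate planes, so inner products with vectors of block $i$ see only the $i$-th component of the current iterate. Running the two-dimensional construction of \cref{fig:lower_bound} in block $1$, then block $2$, and so on, therefore yields a valid greedy sequence. The final squared norm is the sum of the squared norms reached in each block, at least $d\cdot(1/\delta)^2$, which gives $\|\vr x_n\|\geq\sqrt{d}/\delta$.

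\textbf{Two-dimensional step.} Fix a block and write points as $(a,b)$ with $a$ the $\vr e_{2i-1}$ coordinate and $b$ the $\vr e_{2i}$ coordinate. Put $w:=\sqrt{1-\delta^2}\,\vr u_i=(0,\sqrt{1-\delta^2})\in[-1,1]T$ and $\vr v_i=(\delta,-\sqrt{1-\delta^2})$. Starting from $(k\delta,0)$ we take two steps.
\begin{itemize}
\item First add $w$. This is a valid greedy step, since $\innerprod{(k\delta,0)}{w}=0$ and so either sign is allowed; we reach $(k\delta,\sqrt{1-\delta^2})$.
\item Then add $\vr v_i$. The inner product is $k\delta^2-(1-\delta^2)$, which is $\le 0$ as long as $k\delta^2\le 1-\delta^2$, i.e.\ $(k+1)\delta^2\le 1$. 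So greedy may choose $+\vr v_i$, and we reach $((k+1)\delta,0)$.
\end{itemize}

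\textbf{Iteration.} Repeating the two steps for $k=0,1,\dots,K-1$ with $K:=\lfloor 1/\delta^2\rfloor$ reaches $(K\delta,0)$. One further step adds $w$ (inner product $0$), giving $(K\delta,\sqrt{1-\delta^2})$. Next, the inner product with $\vr v_i$ is $K\delta^2-(1-\delta^2)=(K+1)\delta^2-1>0$, so greedy is forced to add $-\vr v_i$, reaching $((K-1)\delta,\,2\sqrt{1-\delta^2})$, exactly as in the figure. The squared norm of this endpoint is
\[
(K-1)^2\delta^2+4(1-\delta^2).
\]

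\textbf{Main obstacle.} The delicate point is verifying that this endpoint (or a better one) has squared norm at least $1/\delta^2$, and handling the rounding in $K$. If the $-\vr v_i$ trick falls short, I would continue the same zigzag upward: alternate $w$ and $\pm\vr v_i$ while the $b$-coordinate grows, checking the sign of each inner product, until the squared norm exceeds $1/\delta^2$. Scaled vectors in $[-1,1]T$ give extra freedom to tune the last step; for instance a final $\lambda\vr v_i$ step can land exactly on a point of norm $1/\delta$.

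Once one block attains squared norm at least $1/\delta^2$, the next block starts at its own origin. The steps in the new block are orthogonal to the earlier blocks, so the earlier components stay frozen and every greedy condition holds as before. Summing over the $d$ blocks gives $\|\vr x_n\|^2\geq d/\delta^2$.
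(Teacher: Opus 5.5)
Your reduction to a single block is sound and matches the paper's structure: the paper walks along $\vr w=\sum_i \vr e_{2i-1}$ and interleaves the blocks, which is equivalent because the blocks are orthogonal. The gap is the single-block bound. You explicitly leave it open, and your construction does not always achieve it. Write $c:=\sqrt{1-\delta^2}$.

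If $\delta^2\le 1/2$, then $K-1>1/\delta^2-2\ge 0$. This gives $(K-1)^2\delta^2+4c^2>1/\delta^2-4+4\delta^2+4-4\delta^2=1/\delta^2$, so your endpoint does work there, although you never checked it.

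If $\delta^2\in(1/2,1)$, however, then $K=1$. Your block sequence is $(0,0),(0,c),(\delta,0),(\delta,c),(0,2c)$, and every point has norm at most $\max(1,2c)<1/\delta$, since $4\delta^2(1-\delta^2)<1$ for $\delta^2\neq 1/2$.

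Your fallbacks do not rescue this case. Once the $\vr e_{2i}$-coordinate is positive, greedy only lets you decrease it along $\vr u_i$, so $w$ cannot be added again and the ``upward zigzag'' stalls. And for a single step along $\vr v_i$, $\norm{\vr x+\lambda\vr v_i}^2$ is convex in $\lambda$, so a scaled step never beats the better of the unscaled choices.

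The missing idea is \emph{where} to put the fractional step. Moving from $(l,0)$ to $(l+\epsilon,0)$ via $\frac{\epsilon}{\delta}c\,\vr u_i$ and then $\frac{\epsilon}{\delta}\vr v_i$ is greedy iff $l\delta\le \frac{\epsilon}{\delta}c^2$, i.e.\ $l\le \epsilon(1-\delta^2)/\delta^2$. So short steps are admissible only while $l$ is small, and the fractional step must come first, not last. Tuning a final step from $(\delta,0)$ this way would need $\delta^4\le(1-\delta^2)^2$, which fails exactly when $\delta^2>1/2$.

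The fix is as follows. Set $\epsilon_0:=1/\delta-K\delta\in[0,\delta)$. Take the $\epsilon_0$-step from the origin, which is always admissible, and then take $K$ full steps. The $k$-th full step starts at $l=\epsilon_0+k\delta$ and is admissible iff $k\le K-1$. Each block then ends exactly at $(1/\delta,0)$, and your concatenation argument gives $\norm{\vr x_n}=\sqrt{d}/\delta$.

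This is what the paper's Claim~\ref{clm:absorbing_multiplicity} is meant to supply. Be aware that the inner product displayed in its proof is correct only for $\epsilon=\delta$. For general $\epsilon$ the condition is the one above, which is precisely why the order of the steps matters.
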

\begin{proof}
Let us define 
\begin{equation*}
    \vr w = \Sigma_{i \in [d]} \vr e_{2i-1}.
\end{equation*}
We observe the following claim, intuitively stating that along the line determined by the vector $\vr w$ we can find a greedy $[-1,1]T$-greedy sequence $\vr x_0, \vr x_1, \ldots, \vr x_n$ such $\vr x_n$ such that $\| \vr x_n \| \geq \frac{\sqrt{d}}{\delta}$.
\begin{claim}\label{clm:absorbing_multiplicity}
    For $l \in [0, \frac{1}{\delta} - \delta]$ if there exists a greedy $[-1,1]T$-sequence $\vr x_0, \vr x_1, \ldots, \vr x_n$ such that $l \vr w = \vr x_n$ then there exists a greedy $[-1,1]T$-sequence $\vr y_0, \vr y_1, \ldots, \vr y_m$ such that $(l+\epsilon) \vr w = \vr y_m$ for every $\epsilon \in (0, \delta]$. 
\end{claim}

Before we prove \Cref{clm:absorbing_multiplicity}, we show how it implies \Cref{clm:absorbing_x}. Since trivially there exists a single element $[-1,1]T$-greedy sequence ending in zero by repetitive use of \Cref{clm:absorbing_multiplicity} we get that there exists a $[-1,1]T$-greedy sequence ending in $\frac{1}{\delta} \vr w$, which has norm $\frac{\sqrt{d}}{\delta}$. This completes the proof of \Cref{clm:absorbing_x}.
\end{proof}

\begin{proof}[Proof of \Cref{clm:absorbing_multiplicity}]
Let us fix $l \in [0, \frac{1}{\delta} - \delta]$, $\epsilon \in (0, \delta]$ and let $\vr{x}_0, \vr{x}_1, \ldots, \vr{x}_n$ be a $[-1,1]T$-greedy sequence such that $l \vr w = \vr{x}_n$. We define a sequence of vectors $\vr y_0 = l \vr w$, $\vr y_{2i-1} = \vr y_{2i-2} + \frac{\epsilon}{\delta}\sqrt{1-\delta^2}\vr u_i$, and $\vr y_{2i} = \vr y_{2i-1} +  \frac{\epsilon}{\delta}\vr v_i$ for $i \in [d]$. We claim that $\vr{x}_0, \vr{x}_1, \ldots, \vr{x}_n, \vr{y}_1, \vr{y}_2, \ldots, \vr{y}_{2d}$ is a $[-1,1]T$-greedy sequence ending in $(l+\epsilon)\vr w$. Let us observe that for $i \in [d]$ we have $\vr y_{2i} = \vr y_{2i-2} +  \frac{\epsilon}{\delta}\sqrt{1-\delta^2}\vr u_i + \frac{\epsilon}{\delta}\vr v_i = \vr y_{2i-2} + \epsilon \cdot \vr e_{2i-1} = \vr y_0 + \sum_{i=1}^i \epsilon \cdot \vr e_i$. As $\vr y_0 = l \vr w$ we get that $\vr y_{2d} = (l+\epsilon) \vr w$. Hence, we only need to show that the sequence is $[-1,1]T$-greedy.

Observe that for $i \in [d]$ we have that $\innerprod{\vr y_{2i-2}}{\vr u_i} = 0$ and $\frac{\epsilon}{\delta}\sqrt{1-\delta^2}\vr u_i \in [-1,1]T$. Moreover, for $i \in [d]$ we have
\begin{equation*}
    \innerprod{\vr y_{2i-1}}{\vr v_i} = \innerprod{\frac{\epsilon}{\delta}\sqrt{1-\delta^2}\vr u_i + l \cdot \vr e_{2i-1}}{\frac{\epsilon}{\delta} \vr v_i} = -\frac{\epsilon}{\delta}(1-\delta^2) + l\epsilon \leq \frac{\epsilon}{\delta}(\delta^2 -1) + \frac{\epsilon}{\delta}(1 -\delta^2) = 0.
\end{equation*}
Moreover, $\frac{\epsilon}{\delta} \vr v_i \in [-1,1]T$. Hence, the sequence is $[-1,1]T$-greedy.  This completes the proof of \Cref{clm:absorbing_multiplicity}.

\end{proof}

\section{Concluding Remarks}

In this paper, we have shown that iterates of the Euclidean greedy vector
balancing algorithm remain universally bounded against sequences taken from any
finite set $T \subset {\cal S}^{d-1}$. Our upper bound $(2/\delta_T)^{d-1}$ and
lower bound $\sqrt{d/2}/\delta_T$ are unfortunately very far apart. A natural
question is thus to close this gap. We have also provided an application of our
greedy bound to scenario scheduling and SGD sample reordering. We however expect
there be many more interesting settings where greedy vector balancing bounds
can be applied, and we hope this will lead to a fertile area of future
research.   

\bibliographystyle{plain}
\bibliography{greedy}

\end{document}